\documentclass[final]{colt2026}

\usepackage{csquotes}
\usepackage{tikz}
\usetikzlibrary {matrix, positioning,calc ,patterns}
\usepackage[dvipsnames]{xcolor}
\usepackage[noend]{algpseudocode}
\newcommand{\redit}[1]{{#1}}

\newcommand{\SAT}{\textit{SAT}}

\newcommand{\EXP}{\mathbb{E}}
\newcommand{\PROB}{\textnormal{Pr}}

\newcommand{\Pru}[2]{\ensuremath{\PROB_{#1}\left[#2\right]}}
\newcommand{\Ex}[1]{{\EXP \sbr{#1}}}
\newcommand{\Exu}[2]{\ensuremath{\mathbb{E}_{#1}\left[#2\right]}}

\newcommand{\norm}[1]{||\,#1\,||}

\newcommand{\rbr}[1]{\left(\,#1\,\right)}
\newcommand{\abs}[1]{\left|\,#1\,\right|}
\newcommand{\sbr}[1]{\left[\,#1\,\right]}
\newcommand{\cbr}[1]{\left\{#1\right\}}

\usepackage{color-edits}
\addauthor{kb}{cyan}   
\addauthor{jo}{purple}   
\addauthor{aa}{blue}
\addauthor{ig}{orange}

\newtheorem{claim}[theorem]{Claim}

\newcommand{\rr}{\ensuremath{r}}

\newcommand{\nn}{\ensuremath{n}}

\newcommand{\I}{\ensuremath{I}}

\newcommand{\as}{\ensuremath{\sigma}}

\newcommand{\range}{\ensuremath{\{-1, 1\}}}

\newcommand{\Q}{\ensuremath{Q}}

\newcommand{\D}{\ensuremath{D}}

\newcommand{\B}{\ensuremath{\mathcal{B}}}

\newcommand{\qry}{\ensuremath{q}}
\newcommand{\LL}{\ensuremath{L}}

\newcommand{\cc}{\ensuremath{c}}

\newcommand{\SDN}{\textnormal{SDN}}

\newcommand{\R}{\mathbb{R}}

\newcommand{\E}{\mathbb{E}}
\newcommand{\cI}{\mathcal{I}}
\newcommand{\cA}{\mathcal{A}}
\newcommand{\cS}{\mathcal{S}}
\newcommand{\cD}{\mathcal{D}}
\newcommand{\cDq}{\cD_{\cQ}}
\newcommand{\cB}{\mathcal{B}}
\newcommand{\cQ}{\mathcal{Q}}
\newcommand{\cU}{\mathcal{U}}
\newcommand{\cP}{\mathcal{P}}
\newcommand{\xor}{\oplus}

\newcommand{\bI}{\mathbf{I}}
\newcommand{\bC}{\mathbf{C}}
\newcommand{\bx}{\mathbf{x}}
\newcommand{\EqComment}[1]{\text{\emph{(#1)}}}

\newcommand{\code}{\mathcal{C}}

\newcommand{\ignore}[1]{}
\newcommand{\FF}{F}

\title[Quiet Planting for $k$-SAT, Multiple Solutions of Arbitrary Geometry]{Quiet Planting for $\mathbf{k}$-SAT, \\
Multiple Solutions of Arbitrary Geometry} 
\usepackage{times}

\coltauthor{
 \Name{Ali Ahmadi} \Email{ahmadia@umd.edu}\\
 \Name{Kiarash Banihashem} \Email{kiarash@umd.edu}\\
 \Name{Iman Gholami} \Email{igholami@umd.edu}\\
 \Name{Mohammad Taghi Hajiaghayi} \Email{hajiagha@umd.edu}\\
 \Name{Jan Olkowski} \Email{jan.olkowski@gmail.com}\\
 \addr University of Maryland, College Park, Maryland, USA
}

\begin{document}

\maketitle

\begin{abstract}
    Recent work on ``quiet planting'' in combinatorial optimization aims to generate instances with a hidden solution that is hard to recover, typically by making the planted distribution statistically indistinguishable from uniform for specific algorithms, such as statistical queries. A prominent example is planted $k$-SAT, where $O(n^{k/2})$ clauses can be planted while maintaining indistinguishability from uniform instances, evidenced by prior hardness results which also align with findings in SAT refutation. Despite extensive research and practical use in benchmarking SAT solvers, the challenge of quietly planting multiple solutions while preserving hardness has remained an open problem. 
    
    This work initiates the study of quiet planting with an arbitrary number of solutions, proposing the first method to construct quiet planting distributions for $k$-SAT formulas that accommodate more than one solution. We provide statistical query lower bounds for distinguishing these planted instances from uniform ones, and our method allows for planting solutions with arbitrary geometric relationships, including varying Hamming distances. A key innovation facilitating multiple solutions is the ability to incorporate arbitrary correlations between variable selection in clauses and their negation patterns, departing from prior approaches. We also investigate the worst-case complexity of SAT by showing the difficulty in distinguishing satisfiable instances with numerous solutions from unsatisfiable ones, addressing an open problem of Hsieh, Mohanty, and Xu (CCC'22). From a technical standpoint, we generalize the concept of $(r-1)$-wise uniformness in clause distributions, proving hardness holds if the marginal distribution over negation patterns is $(r-1)$-wise uniform, and reveal a connection to binary linear codes, demonstrating how a $[k, t, r]$ code can guide the planting of up to $2^t - 1$ solutions on $k$ variables with $(r-1)$-wise uniform negation distributions.
\end{abstract}

\begin{keywords}
Satisfiablity, Quiet Planting, Statistical Query
\end{keywords}

\section{Introduction}
\label{sec:intro}

The $k$-satisfiability problem ($k$-SAT) involves determining whether a solution 
exists for a boolean formula with $n$ logical variables. The formula consists of $m$ clauses,
each containing $k$ different variables or their negations. 
\redit{A solution satisfies the formula if, in each clause, at least one variable or its negation evaluates to true.} The NP-completeness of $k$-SAT, for $k \ge 3$, is crucial for understanding computational complexity and demonstrating the difficulty of problems in NP. It finds applications in fields such as automated reasoning and cryptography. 
\redit{The efficacy of $k$-SAT in encoding decision problems has led to its widespread use in reduction techniques}, particularly in addressing problems like Max-SAT and Max Clique, highlighting its adaptability in tackling intricate computational challenges both theoretically and in real-world problem-solving contexts.
Beyond the complexity theory applications, $k$-SAT can be used to model many real-world problems spanning areas such as AI planning, physical systems, and cryptography.

Understanding the hardness of k-SAT is particularly important because it is a central benchmark for computational difficulty, used in the evaluation of SAT solvers\footnote{The planting method described in this paper has been used in the DARPA program called \href{https://www.darpa.mil/news-events/2021-10-04}{QuICC} (Quantum-Inspired Classical Computing), which aims to apply insights from quantum algorithms to develop Quantum-Inspired (QI) solvers for complex DOD optimization problems, targeting a reduction in computational energy. The SAT instances generated using the method in this paper not only provide theoretical results but also serve as effective tools for testing SAT solvers (Quantum-Inspired). Although many methods exist for generating ``hard'' test cases for SAT solvers, those generated by the planting method in this paper are considered among the most challenging instances generated for this project.} and as a basis for hardness assumptions in cryptography. 
A single instance cannot reliably demonstrate hardness because an algorithm can incorporate the specifics of that particular instance into its implementation. Consequently, efforts have focused on characterizing classes of hard instances.
One extensively studied class is derived from the random satisfiability problem (random $k$-SAT). This class consists of instances where $m$ clauses are independently drawn from a distribution over all clauses of $k$ literals, without repetitions, across $n$ variables. 

For the class of random $k$-SAT instances, the seminal result of~\cite{DBLP:conf/stoc/DingSS15} (STOC'15) shows that for sufficiently large $k$ there exists a threshold $\tau_k \in \mathbb{R}$ such that if $m$ is a random variable drawn from $\textnormal{Pois}(n \alpha)$ and $m$ clauses are subsequently drawn independently from the uniform distribution over feasible clauses, then the instance has at least one solution whp\footnote{An event occurs with high probability (whp) if its probability tends to one in the limit $n \rightarrow \infty$.} if $\alpha < \tau_k$, but has no solutions whp if $\alpha > \tau_k$. 
Comprehending the difficulty of various classes of random SAT involves focusing on several key aspects. Examples include the refutation problem, which aims to determine whether we can prove that a random SAT instance cannot be solved, and the search/recovery problem, which concentrates on reconstructing a solution from a random SAT instance. Studying these problems is essential for assessing computational boundaries and demonstrating the level of difficulty inherent in various classes of instances.

In this context, the concept of the planted $k$-satisfiability problem (planted $k$-SAT) arises. This version of $k$-SAT follows the same basic setup as random $k$-SAT but with a crucial difference: every instance created is guaranteed to have at least one specific solution, meaning a fixed assignment is always a solution, though other valid solutions may also exist. 
\redit{Naively planted instances for $k$-SAT, in which a solution is fixed and the distribution is uniform over clauses satisfied by this solution, are nevertheless solvable in linear time. This can be done by a majority vote over the number of clauses that evaluate to true or false for each variable.}
\redit{However, some planted k-SAT distributions closely resemble uniform k-SAT distributions,  making them difficult to distinguish despite fundamental differences.} These distributions, often called \textit{quietly planted distributions}, provide solvable instances that mimic the behavior of uniform random $k$-SAT at certain thresholds—especially where uniform random $k$-SAT is typically unsatisfiable. 
While having universal results about the hardness of this class of instances remains an open problem, significant progress has been made in developing hardness in restricted models of computation based on plausible conjectures in complexity theory.

In this work, we focus on the statistical query model of computation introduced by ~\cite{DBLP:conf/stoc/Kearns93,DBLP:journals/jacm/Kearns98} (STOC'93). In this model, we are given access to a statistical oracle capable of as
estimating expectations of functions of random clauses without explicit
access to each clause. 
Focusing on these algorithms allows us to provide formal hardness results for the 
 the problem of distinguishing between a planted distribution from some reference distribution, typically taken to be the uniform distribution.
Formally, considering two probability distributions, denoted as $\Q$ and $\D$, over the set of clauses with $k$ variables, the planted distribution $\Q$ is subject to specific conditions, while the reference distribution $\D$ represents the baseline or null distribution.
The goal of the corresponding testing problem is to devise a statistical query algorithm capable of determining, based on queries to the statistical oracle, whether the underlying distribution is closer to $\Q$ (planted distribution) or $\D$ (reference distribution). The model of computation is very general, as many algorithms from various domains, including machine learning, cryptography, and anomaly detection, where distinguishing between different distributions is essential for making informed decisions based on observed data, can be implemented in this model. 

Since the introduction of the statistical query\redit{, significant results} have extended the understanding of the difficulty of planted $k$-SAT in this model. 
The distribution complexity parameter 
$\rr$, a key focus, characterizes the number of \redit{clauses} needed for efficient recovery of the planted assignment. 
Subsequent works have focused on developing algorithms depending on $\rr$ that efficiently identify the planted solution~\cite[]{DBLP:conf/approx/BogdanovQ09, DBLP:conf/focs/CharikarW04, DBLP:journals/cc/Applebaum16}. Significant advancement in this field was achieved through the work of \cite{DBLP:conf/stoc/FeldmanPV15, DBLP:journals/siamcomp/FeldmanPV18}. Their study focuses on the complexity of distinguishing between a planted $k$-SAT distribution and a random $k$-SAT distribution for the class of \emph{statistical query} algorithms.
The lower bounds in \cite{DBLP:conf/stoc/FeldmanPV15} effectively justify the barrier of $n^{k/2}$ clauses
observed for recovering planted solutions, as well as the closely related refutation problem.

The barrier of $n^{k/2}$ is a known limit for many problems arising around $k$-SAT satisfiability\redit{,} such as recovery and refutation\redit{. Many works} have obtained lower bounds showing that specific classes of algorithms cannot go beyond this barrier. In addition to the lower bound for statistical query implied by \cite{DBLP:conf/stoc/FeldmanPV15} (STOC'15), algorithms for the refutation problem were introduced by \cite{DBLP:conf/focs/AllenOW15,DBLP:conf/stoc/RaghavendraRS17} (FOCS'15, STOC'17), matching the lower bound established in \cite{DBLP:conf/stoc/KothariMOW17} (STOC'17).
In this work, we follow the model of computation presented in~\cite{DBLP:conf/stoc/FeldmanPV15}
while at the same time proposing novel and significantly more general results\redit{,} removing many limitations of the previous methods.
While our result relies on the SQ framework of~\cite{DBLP:conf/stoc/FeldmanPV15}, extending it to low-degree tests, directly or via the equivalences in~\cite{DBLP:conf/colt/BrennanBH0S21}, is a promising future direction. It is worth noting that they use a stronger notion of statistical dimension.

We note that the recovery problem is closely related to the refutation problem, in which the goal is to provide an algorithm that can refute the existence of a solution for a random instance (i.e., prove that the instance has no solution) with high probability. 
\cite{certsol} (CCC'22) \redit{studies} a version of this problem with multiple solutions, focusing on the problem of certifiable upper bounds for random SAT instances. However, this does not capture \redit{the aspect of arbitrary geometry, which we consider in this work. Moreover,} in general, lower bounds for refutation do not imply lower bounds for recovery (e.g., see \cite{DBLP:conf/stoc/Wein23,  DBLP:conf/soda/BanksMR21,
DBLP:conf/colt/BandeiraBKMW21,
DBLP:conf/innovations/BandeiraKW20}).

Furthermore, \cite{certsol} constructs a quiet planted distribution of $k$-SAT clauses based on a random \( k \)-uniform hypergraph with a planted independent set. In this distribution, the clauses are satisfied by \( 2^{|S|} \) assignments, where \( S \) is an independent subset of variables. Therefore, the solution space is determined by all possible values for variables of $S$. Although their approach is outlined without formal proof and it is not clear how large $|S|$ can be compared to $n$ while maintaining indistinguishability, a notable distinction is that their construction does not incorporate the geometry of the solutions. Another difference is that they focus on selecting variables to achieve multiple solutions, whereas in our paper, for a randomly chosen subset of variables, we plant multiple solutions with arbitrary geometry using the negation pattern. \redit{This matters in practice because overly structured or overly similar planted solutions can be exploited by heuristics, making benchmarks less informative. By generating hard instances with many planted solutions and controlled geometry, our approach improves both solver evaluation and our understanding of where hardness comes from.}
\redit{\paragraph{Our Contributions.} This paper presents two key results: First, we introduce a novel construction that quietly plants multiple solutions in \(k\)-SAT instances. Our method generalizes existing quiet planting techniques by allowing arbitrary geometry of solutions, leveraging binary linear codes and \((r-1)\)-wise uniformness. Second, building on the statistical query framework, we establish hardness guarantees for our construction by extending existing approaches. These advances significantly expand the applicability of quiet planting methods for generating hard \(k\)-SAT instances.

We begin by selecting a binary linear code. Assume this binary linear code is $[k, t, r]$, which has block size $k$, dimension $t$, and Hamming distance between codes of $r$ (see Section \ref{apx:prelim}). Next, a solution set $A$ is chosen such that its size is bounded by $2^t - 1$, ensuring the set remains small relative to the code’s dimension. Using the generator matrix, for any choice of $k$ variables $I$, we then define sign pattern distributions $Q_I$, which capture the distributional behavior of the code under different subsets. A central step is proving that these distributions exhibit $(r-1)$-uniformness, meaning that they retain uniformity properties up to dimension $r-1$. Finally, leveraging this structure, we establish a lower bound on the power of statistical query algorithms when applied to instances generated according to this distribution, highlighting inherent limitations in their ability to solve such problems efficiently.

Our result generalizes the work of~\cite{DBLP:conf/stoc/FeldmanPV15} by allowing the planting of multiple solutions with arbitrary geometry. More specifically, their work focused on the case \( r = k \), which corresponds to a planted random \( k \)-XOR-SAT instance, and proved a similar lower bound to the one we obtain in this paper. It is important to note that instances constructed by both our method and the prior approach can be solved using Gaussian elimination-based methods; however, introducing a small amount of noise can mitigate this vulnerability.}

\section{Preliminaries}
\label{sec:prelim}

\paragraph{General Notations.} We write $[n]=\{1,\dots,n\}$. For a vector $a$ of length $n$ and an index $i\in [n]$, we denote its $i^\text{th}$ coordinate by $a[i]$. For a sequence $a$ of length $n$ and any index sequence $b\in [n]^\ell$, we write
\(
a[b]=(a[b[1]],\dots,a[b[\ell]]).
\)
For binary vectors $u,v\in\{0,1\}^n$, the coordinate-wise ``XOR" is denoted by $u\oplus v$, and $-u$ denotes the coordinate-wise ``NOT''. Moreover, for any set $S\subseteq[n]$, we define
\(
\chi_S(u) := \bigoplus_{i\in S} u[i].
\)

In many parts of the analysis, we work with vectors in $\{-1,1\}^n$ (where the correspondence is $1\leftrightarrow 0$ and $-1\leftrightarrow 1$). In this setting, we extend the $\oplus$ operator so that for any $u,v\in\{-1,1\}^k$, 
\(
u\oplus v\quad\text{is their coordinate-wise product,}
\)
and $-u$ flips each sign.

\paragraph{SAT Notation.}  
A $k$-SAT formula is a CNF boolean formula in which each clause contains exactly $k$ literals; in a $k$-XOR-SAT, literals within a clause are connected via XOR. We let $n$ denote the total number of variables and $k$ the clause width. Each clause is represented as a tuple 
\(
C=(I,x),
\)
where,
    $I\in [n]^k$ is a sequence (or set) of $k$ distinct variables (the \emph{variable sequence}), and 
    $x\in\{0,1\}^k$ is the \emph{sign pattern}.
We denote by $X_k$ the set of all clauses of size $k$, and by $\cI_k\subseteq [n]^k$ the set of all variable sequences. Bold letters (e.g., $\bx$) indicate random objects, while non-bold letters denote fixed values.

An assignment is a vector in $\{0,1\}^n$. A clause $(I,x)$ is said to be \emph{satisfied} by an assignment $\sigma$ if there exists an $i\in [k]$ such that $x[i]=\sigma[I[i]]$. For an assignment $\tau\in\{0,1\}^n$, define
\(
C\oplus \tau := (I,\, x\oplus \tau[I]),
\)
and for any index subset $S\subseteq [k]$, define the derived clause
\(
C[S] := (I[S],\, x[S]).
\)

When clauses (or their components) are sampled from a distribution $\cQ$, we extend the above notation naturally. For example, for an assignment $\tau\in\{-1,1\}^n$, $\cQ_{\tau} = \cQ\oplus \tau$ denotes the distribution of $C\oplus \tau$ with $C\sim\cQ$, and for a fixed $I$, $Q_I$ denotes the conditional distribution of the sign pattern given that the variable sequence is $I$.

\paragraph{Statistical Algorithms.}
Under the class of \emph{statistical algorithms}, we consider algorithms that access data through statistical oracles. Given a distribution $D$ over a domain $X$, we use the following statistical oracles (see, e.g., \cite{DBLP:conf/stoc/FeldmanPV15}):
\begin{itemize}
    \item \textbf{1-MSTAT$(L)$:} For a query function $h: X\to\{1,\dots,L\}$, the oracle returns $h(x)$ for a random $x\sim D$.
    \item \textbf{VSTAT:} Given a parameter $t>0$ and a query function $h:X\to[0,1]$, it returns a value $v\in [p-\tau,\, p+\tau]$, where
    \[
    p=\E_{D}[h(x)]\quad \text{and}\quad \tau=\max\left\{\frac{1}{t},\,\sqrt{\frac{p(1-p)}{t}}\right\}.
    \]
    \item \textbf{MVSTAT$(t,L)$:} For $t>0$, a query function $h:X\to\{0,\dots,L-1\}$, and a collection $\mathcal{S}$ of subsets of $\{0,\dots,L-1\}$, the oracle returns a vector $v\in\R^L$ such that for every $Z\in\mathcal{S}$,
    \[
    \left|\sum_{\ell\in Z} v_\ell - p_Z\right|\le \max\left\{\frac{1}{t},\,\sqrt{\frac{p_Z(1-p_Z)}{t}}\right\},
    \]
    where $p_Z=\Pru{x\sim D}{h(x)\in Z}$. The cost of a query is $|\mathcal{S}|$.
\end{itemize}
The $1$-MSTAT oracle, in the above general version, was defined by~\cite{DBLP:conf/stoc/FeldmanPV15}. Related versions can also be found in~\cite{DBLP:journals/jacm/FeldmanGRVX17,DBLP:conf/colt/BrennanBH0S21}. This oracle corresponds to a single query to a distribution where the output complexity is limited to $L$ values. 

A statistical algorithm's complexity is measured by the number of queries multiplied by the cost per query. This model is broad enough to capture many standard approaches\redit{,} including EM, MCMC, method of moments, simulated annealing, and convex optimization.

\paragraph{Distinguishing Distributions.} We study the following \emph{distributional decision problem} $\mathcal{B}(\mathcal{D},D)$: Given a reference distribution $D$ over a domain $X$ and a set $\mathcal{D}$ of alternative (or planted) distributions, decide whether an unknown distribution $D'\in \{D\}\cup\mathcal{D}$ is equal to $D$ or belongs to $\mathcal{D}$, using $t>0$ samples from $D'$. In many applications (e.g., in $k$-SAT), one sets $X=X_k$, takes $\mathcal{D}$ to be a collection of distributions with planted solutions, and $D$ as the uniform distribution over $X_k$.

For any function $h: X\to \R$, with $\|h\|_D=\sqrt{\E_{D}[h^2]}$, the \emph{discrimination norm} relative to $D$ is
\[
\kappa_2(\mathcal{D},D) = \max_{h:\,\|h\|_D=1} \E_{D'\sim\mathcal{D}}\Bigl[\,\bigl|\E_{D'}[h]-\E_{D}[h]\bigr|\Bigr].
\]
This norm measures the average effectiveness of the query function $h$ in distinguishing distributions from $\mathcal{D}$ from the reference $D$. In particular, as shown in \cite{DBLP:conf/stoc/FeldmanPV15}, if $\kappa_2(\mathcal{D},D)=\kappa$, then a single query to $\operatorname{VSTAT}(1/(3\kappa^2))$ is insufficient to differentiate a typical $D'\in\mathcal{D}$ from $D$.

\paragraph{Statistical Dimension.} The \emph{statistical dimension} $\SDN(\mathcal{B}(\mathcal{D},D),\kappa)$ is defined as the largest integer $d$ for which there exists a finite set $\mathcal{D}_D\subseteq\mathcal{D}$ such that every subset $\mathcal{D}'\subseteq\mathcal{D}_D$ with $|\mathcal{D}'|\ge |\mathcal{D}_D|/d$ \redit{satisfies
\(
\kappa_2(\mathcal{D}',D)\le \kappa.
\)}
Intuitively, a high statistical dimension indicates that even large subsets of $\mathcal{D}_D$ are hard to distinguish from $D$.

This concept immediately leads to lower bounds on the query complexity of statistical algorithms. In particular, \cite[Theorem~3.4]{DBLP:conf/stoc/FeldmanPV15} shows that if $\SDN(\mathcal{B}(\mathcal{D},D),\kappa)=d$ and $L\ge 2$, then any randomized statistical algorithm that solves $\mathcal{B}(\mathcal{D},D)$ with probability at least $2/3$ requires:
\begin{itemize}
    \item $\Omega(d/L)$ queries to \textit{MVSTAT}$(L,1/(12\kappa^2L))$, and 
    \item $\Omega\bigl(\min\{d,1/\kappa^2\}/L\bigr)$ queries to \textit{1-MSTAT}$(L)$.
\end{itemize}

\section{\redit{Quiet planting of multiple solutions}}
\label{sec:our-results}

Our main result in this paper is a method for generating random $k$-SAT instances with quietly planted solutions that are \enquote{hard} to distinguish from uniformly
random instances, where the notion of hardness
is formalized by studying the problem in the statistical query framework and characterizing its statistical dimension.
Specifically, we study algorithms that only have
indirect access to the underlying distribution of the clauses
using a statistical oracle, and provide
a lower bound on the number of queries they need to distinguish
the underlying distribution from a uniformly random one. 

\redit{Our approach is novel in two key aspects. First, it enables the planting of multiple solutions. Second, it allows the planted solutions to exhibit arbitrary geometry in $\mathbb{F}_2^n$. 
More precisely, by \emph{arbitrary geometry}, we mean that for any set $S \subseteq \mathbb{F}_2^n$, our method constructs a random instance whose solution set contains a subset $S' \subseteq \mathbb{F}_2^n$ such that there exists a bijection $\phi: S \to S'$ satisfying
\(
\phi(x) \oplus \phi(y) = x \oplus y \text{ for all } x, y \in S.
\)
In other words, the coordinate-wise XOR structure among the original vectors in $S$ is preserved under the mapping to the solution set.}

It is noteworthy that the concept of arbitrary geometry is crucial for effectively testing SAT solvers. Quiet planting is primarily motivated by the need to evaluate the robustness of these solvers; thus, in scenarios with nearly identical sets of solutions, the solvers may not be rigorously tested. Additionally, certain heuristics could potentially compromise the integrity of the test cases.
We also remark that planting an exact set of solutions into a random instance is, to some extent, an infeasible \redit{task}, as it \redit{can be defeated by a statistical algorithm that explicitly checks for} these very specific solutions.
Please note that the trivial way of planting a solution by randomly taking satisfying clauses is vulnerable to statistical query algorithms because there is a bias in the number of times that a variable appears as true or false regarding its value in the chosen assignment for planting.

In this paper, we follow the statistical query model used by \cite{DBLP:conf/stoc/FeldmanPV15} for studying planted $k$-SAT instances.
Their work showed that
the primary parameter measuring the hardness of the testing problem for distinguishing planted instances from uniformly random ones is the \emph{statistical dimension} of the corresponding testing problem.
Here, the statistical dimension, which we will also denote with \emph{SDN}, serves as a proxy for the hardness of the problem, with higher dimensions corresponding to harder problems (see section~\ref{sec:prelim} for a formal definition)
This is demonstrated in \cite{DBLP:conf/stoc/FeldmanPV15}, where the authors prove that bounds on statistical dimension imply lower bounds on the query complexity of algorithms for the problem for various statistical query oracles.
Given these results, we focus on specifying distributions for which the statistical dimension of this testing problem is high. Formally, we prove the following result (see the appendix for the formal proof).

\begin{theorem}\label{thm:intro_main}
  Assume that there exists a binary linear code \([k, t, r]\) and a variable \(L < 2^t\).
  Let $\cA$ be an arbitrary set of assignments of size at most $L$.
  A set $\cD$ of distributions over clauses exists such that
  \begin{enumerate}
      \item 
      The statistical dimension of the testing
      problem $\cB(\cD, \cU)$ of distinguishing a
      $\cU$ (Uniform distribution of $k$-clauses) from a distribution $\cQ$ chosen uniformly
      at random from $\cD$ satisfies
      the bound for any $q \ge 1$
    \begin{align}
        \SDN\left(\B(\cD, \cU), \frac{\cc(\log \qry)^{\rr/2}}{\nn^{\rr/2}}\right)\ge \qry,
        \label{eq:sdn_bound_intro}
      \end{align}
     where $\cc$ is a constant and SDN refers to the statistical dimension of the problem (see Section~\ref{sec:prelim})
      
      \item 
      For any $\cQ \in \cD$, there exists 
      $\tau \in \cbr{0, 1}^n$ such that
      any clause $C \sim \cQ$ satisfies
      all of the assignments $\sigma \xor \tau$ for $\sigma \in \cA$ with probability $1$, where $\xor$ denotes the coordinate-wise XOR between two vectors.
  \end{enumerate}
\end{theorem}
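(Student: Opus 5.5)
The plan is to build each planted distribution from the linear code and then reduce the statistical-dimension bound to the $(r-1)$-wise uniform analysis of \cite{DBLP:conf/stoc/FeldmanPV15}.

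\textbf{Construction.} Let $G\in\{0,1\}^{t\times k}$ be a generator matrix of the $[k,t,r]$ code $\code$. The codewords $\{uG : u\in\{0,1\}^t\}$ form a $t$-dimensional subspace, and every nonzero codeword has weight at least $r$. Enumerate $\cA=\{\sigma_1,\dots,\sigma_{L'}\}$ with $L'\le L<2^t$. The solution set cannot be shifted arbitrarily, so we use the hidden shift $\tau$: each distribution $\cQ_\tau\in\cD$ is indexed by $\tau\in\{0,1\}^n$, and we verify satisfiability for $\sigma\oplus\tau$.

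To sample $C\sim\cQ_\tau$:
\begin{enumerate}
\item Draw $I$ uniformly from $\cI_k$.
\item Form the $L'$ vectors $a_j=\sigma_j[I]\in\{0,1\}^k$.
\item Pick the sign pattern $x=\tau[I]\oplus y$, where $y$ is uniform on a coset $w\oplus\code$ of the code. The offset $w$ is a function of $I$ chosen so that $y\neq \neg(a_j)$ for every $j$.
\end{enumerate}
By the satisfaction rule, clause $(I,x)$ is violated by $\sigma\oplus\tau$ exactly when $x=\neg(\sigma[I]\oplus\tau[I])$, i.e.\ when $y=\neg a_j$. So we need a coset avoiding the $L'$ points $\neg a_j$.

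Such a coset exists. There are $2^{k-t}$ cosets, but this count does not directly beat $L'$. The better route is to randomize the coset by a uniform shift in a complement of $\code$ and then exclude codewords. If we condition instead on $y$ being uniform over $\code$ minus at most $L'<2^t$ points, the distribution is no longer exactly uniform on a subspace.

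The cleaner choice is this: since $|\code|=2^t>L'$, some coset $w\oplus\code$ avoids all $\neg a_j$ whenever $k-t\ge1$. If every coset met some $\neg a_j$, we would need $L'\ge 2^{k-t}$. I therefore expect to choose $w$ as any coset missing $\{\neg a_j\}$. Part~2 would then need $2^{k-t}>L'$, so the hypothesis $L<2^t$ must enter through duality. The intended fix is to let $y$ range over the dual picture: take $y$ uniform over $\{y: y\cdot h_i=b_i\}$ with $H$ a parity-check-type structure. I will settle this by choosing whichever of $\code$ or $\code^\perp$ makes the count $L<2^t$ match, and this is the first detail I will nail down.

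\textbf{$(r-1)$-wise uniformity.} For any $S\subseteq[k]$ with $1\le|S|\le r-1$, the character $\chi_S$ is balanced on any coset of a subspace $V$ iff the indicator $1_S$ is not in $V^\perp$. For $V=\code$, the condition says $\code^\perp$ has no nonzero vector of weight $<r$. This is the dual-distance form of the hypothesis, which we take to be what ``$[k,t,r]$'' guarantees for the role it plays. Hence the conditional sign distribution $Q_I$, XOR-shifted by $\tau[I]$, has all Fourier coefficients $\hat Q_I(S)$ equal to zero for $0<|S|<r$, uniformly in $I$ and in the choice of coset.

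\textbf{SDN bound.} This reduces to the general statement of \cite{DBLP:conf/stoc/FeldmanPV15}, extended to the case where the negation distribution $Q_I$ may depend on $I$. For a query $h$ with $\|h\|_\cU=1$, expand $h(I,\cdot)$ in the Fourier basis over sign patterns. Then
\[
\E_{\cQ_\tau}[h]-\E_\cU[h]=\E_I\Bigl[\sum_{|S|\ge r}\hat h_I(S)\,\hat Q_I(S)\,\chi_S(\tau[I])\Bigr].
\]
Averaging over $\tau$ in a subset $\cD'$ of density $1/q$, each degree-$\ell$ term is a degree-$\ell$ polynomial in $\tau$. Hypercontractivity, or the level-$\ell$ inequality, bounds its mean over a set of measure $1/q$ by $O((\log q)^{\ell/2}/n^{\ell/2})$ times its norm. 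The norm is controlled using $|\hat Q_I(S)|\le1$, Cauchy--Schwarz over $I$, and the fact that the number of $I$ sharing a given $\ell$-subset of variables is about $n^{k-\ell}$ out of $n^k$. Summing over $\ell\ge r$ gives the bound $c(\log q)^{r/2}/n^{r/2}$, with the $\ell=r$ term dominant.

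I expect the main obstacle to be this last averaging step: when $Q_I$ depends on $I$, the coefficients $\hat Q_I(S)$ vary with $I$. I would handle this by absorbing $\hat Q_I(S)$ into $\hat h_I(S)$, since its magnitude is at most $1$, and then rerunning the Feldman--Perkins--Vempala argument verbatim.
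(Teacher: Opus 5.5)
\textbf{There is a genuine gap in the construction.} Your statistical-dimension argument is the paper's: expand $Q_I$ in characters, use $|\hat Q_I(S)|\le 1$ with Cauchy--Schwarz and $|\{I : I[S]=A\}|=|\cI_k|/|\cI_\ell|$, then apply Lemma~\ref{lm:santosh}. (The $n^{-\ell/2}$ comes from the norm bound, not from the level-$\ell$ inequality.) The problem is the construction, and it is not a detail to settle later, because the version you write down fails under the stated hypothesis.

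With $y$ uniform on a coset $w\oplus\code$ of the code itself, two things go wrong. First, dodging the forbidden points needs $2^{k-t}>L'$. Second, $\chi_S$ is balanced only if $1_S\notin\code^\perp$, so you need the \emph{dual} distance of $\code$ to be at least $r$. A $[k,t,r]$ code gives neither. For the repetition code $[k,1,k]$, $\code^\perp$ is the even-weight code, so $\chi_{\{1,2\}}$ is constant on every coset of $\code$. Your $Q_I$ is then not even pairwise uniform, while the claimed exponent $k/2$ needs $(k-1)$-wise uniformity. Declaring that ``$[k,t,r]$'' supplies a dual-distance guarantee changes the theorem.

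\textbf{The fix: use cosets of $\code^\perp$.} Take cosets of $\code^\perp=\ker G$, i.e.\ let $y$ be uniform on $\{y : Gy=b\}$ for a suitable $b\in\{0,1\}^t$. This is exactly the paper's $Q_I$: uniform over solutions of $Vx=y$, with $V$ the generator matrix. This single choice repairs both points.
\begin{itemize}
\item \emph{Counting.} Since $G$ has full row rank, there are exactly $2^t$ such cosets, one per $b$. Each forbidden point $\neg\sigma_j[I]$ lies in exactly one of them, so $L'<2^t$ leaves some $b$ whose coset avoids them all. The paper picks $b$ coordinate by coordinate, each row excluding at least half of the remaining forbidden points, but pigeonhole suffices.
\item \emph{Uniformity.} $\chi_S$ is balanced on a coset of $\code^\perp$ iff $1_S\notin(\code^\perp)^\perp=\code$. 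For $0<|S|<r$ this is exactly the minimum-distance hypothesis. The paper phrases it as linear independence of the rows of $G$ together with $e_{\ell_1},\dots,e_{\ell_{r-1}}$, then counts solutions by Rouch\'e--Capelli.
\end{itemize}
With this substitution, your satisfiability check ($x=\tau[I]\oplus y$ is violated by $\sigma\oplus\tau$ iff $y=\neg\sigma[I]$) and the rest of your plan go through.
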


Intuitively, the bound for SDN suggests that any randomly chosen distribution from \(\cD\) remains indistinguishable from the uniform distribution over \(k\)-clauses up to \(n^{r/2}\) clauses.
In the second point, we focus on the arbitrariness of the geometry of our solutions. Ideally, for any \(\cA\), we would want to generate instances where the solutions are exactly \(\cA\) and are hard to distinguish from random for all algorithms. However, if \(\cA\) is prespecified, this is not possible, as an algorithm could simply check for the solutions in \(\cA\). Therefore, we adopt a modified approach, where instead of \(\cA\), the solution is a random XOR-rotation of \(\cA\). Intuitively, the geometry of the solution set remains unchanged, as all solutions are transformed using the same rotation. \redit{We emphasize that $\tau$ is fixed once per distribution $\cQ$, not per clause: a single $\tau$ is chosen for $\cQ$, and all clauses of the instance are then sampled i.i.d.\ from this fixed $\cQ$, so the planted set $\{\sigma\xor\tau : \sigma\in\cA\}$ is the global set of satisfying assignments.}

As an immediate result and evidence of tightness, we can achieve the bound provided in \cite{DBLP:conf/stoc/FeldmanPV15} by using the \emph{repetition code} as a linear code. Since a repetition code of length $k$ is a $[k, 1, k]$ binary linear code, we can apply the above theorem to conclude that planting one assignment is quiet up to $n^{k/2}$ clauses. \redit{Given the structure of the repetition code, planting a solution in our method is equivalent to planting one in a \(k\)-XOR-SAT instance.}

Another result can be achieved by applying Gilbert-Varshamov bound \cite[]{gilbert1952comparison, varshamov1957estimate} which proved the existence of binary linear code of $[k, (1-H(\delta))k, \delta k]$ for any $\delta < 1/2$ ($H(.)$ denotes the binary entropy function). Consequently, planting less than $2^{(1-H(\delta))k}$ solutions is quiet up to $n^{\delta k / 2}$ clauses. This means we can plant an exponential number of solutions with arbitrary geometry. 

As an immediate corollary, \redit{using known reductions in the literature,} the above theorem implies statistical
query lower bounds for various statistical oracles. We refer to Section~\ref{sec:prelim} for a formal definition of these oracles and an overview of the reductions we use
We note that the bounds we obtain here
correspond to the hardness bounds for distinguishing
random $r$-XOR-SAT from uniform in the same frameworks.
As previously noted by \cite{DBLP:conf/stoc/FeldmanPV15} and \cite{DBLP:conf/stoc/KothariMOW17}, these bounds
suggest that any efficient algorithm for
detecting the existence of our planted solutions requires at least
$\Omega(n^{r/2})$ clauses.

Here, we elaborate on the most challenging part of our findings, which is the ability to plant many assignments of arbitrary geometry. 
In particular, our method allows for planted solutions of arbitrary Hamming distance\redit{,} where Hamming distance refers to the number of variables for which two solutions differ. 

We first recap the state-of-the-art approach for generating quiet planting instances of $k$-SAT
for a single solution.
Let us view each clause $C$ as a tuple
$(I, x)$ where $I$ is an ordered set of $k$ variables and $x \in \cbr{0, 1}^{k}$ is a \emph{sign pattern} that shows whether a variable or its negation \redit{appears} in the clause.
We will often use bold letters to emphasize that a vector or a set is random (e.g., $\bx$ vs $x$).
Prior works construct a distribution over the set of all feasible clauses, denoted $X_{k}$, by combining two independent distributions. 
The first distribution, denoted by $I_{k}$, uniformly chooses an ordered set $\bI$ of $k$ variables with no repetitions. The second one, 
$Q$ is defined over the set of sign patterns $\cbr{0, 1}^{k}$, with the important property that the all ones vector 
$(1, \dots, 1)$ has probability $0$ under $Q$.
If the goal is to plant a reference assignment $\sigma$, a random clause is generated according to the distribution $(\bI, \sigma[\bI] \xor \bx)$ where $\bI \sim I_{K}$, $\bx \sim Q$.
Since $(1, \dots, 1)$ is not sampled under $Q$, this ensures that the clause is always satisfied under $\sigma$.
While this approach allows us to construct distributions hard to distinguish from the uniform random distribution in the statistical query model (cf. Theorem~3.1 in~\cite{DBLP:conf/stoc/FeldmanPV15}), it inherently cannot be used to plant another solution of a large Hamming distance to $\sigma$, as we note below.
\begin{proposition}\label{lem:santhos-impossibility}
For any $Q$, if $\sigma$ is a planted solution in the above construction, there is no $\sigma' \in \cbr{0, 1}^n$ having Hamming distance $k$ to $\sigma$ that is satisfied by every clause that belongs to the distribution specified by the above construction.
\end{proposition}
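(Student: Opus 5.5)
The plan is to exhibit, for any candidate $\sigma'$ at Hamming distance $k$ from $\sigma$, a single clause in the support of the planted distribution that $\sigma'$ violates. First I will restate the satisfaction condition in terms of the noise pattern. A clause in the support has the form $(I,\sigma[I]\xor x)$, where $I$ is any sequence of $k$ distinct variables (all of them have positive probability under $I_k$) and $x$ is in the support of $Q$. By the definition of satisfaction, this clause is satisfied by an assignment $\sigma'$ iff there is some $i\in[k]$ with $\sigma[I[i]]\xor x[i]=\sigma'[I[i]]$.

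Let $T\subseteq[n]$ be the set of coordinates where $\sigma'$ differs from $\sigma$, so $|T|=k$. The satisfaction condition then becomes: there exists $i$ with
\[
x[i]=\mathbf 1\{I[i]\in T\}.
\]
As a sanity check, for $\sigma'=\sigma$ (that is, $T=\emptyset$) this says $x$ has a zero coordinate, which holds precisely because $(1,\dots,1)$ has probability $0$ under $Q$.

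Next, fix any $x$ in the support of $Q$; the support is nonempty since $Q$ is a distribution. Let $Z=\{i: x[i]=0\}$ and $O=\{i:x[i]=1\}$. I will build a variable sequence $I$ that places distinct variables of $T$ at the positions in $Z$ and distinct variables of $[n]\setminus T$ at the positions in $O$. For this clause, every $i$ has $x[i]\neq \mathbf 1\{I[i]\in T\}$, so $\sigma'$ violates $(I,\sigma[I]\xor x)$. Since this clause has positive probability, no such $\sigma'$ can be satisfied by every clause, which proves the claim.

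The only point needing care is feasibility of this choice of $I$, which requires $|Z|\le |T|=k$ and $|O|\le n-k$. The first inequality is automatic. The second holds whenever $n\ge 2k$, which I would state as the standing assumption (it is implicit, since $n\to\infty$ while $k$ is fixed). If one wants to avoid the assumption, I would instead note that when $O=\emptyset$ one may take $I\subseteq T$, and in general it suffices that $n-k\ge |O|$. This is the main, though minor, obstacle. Everything else is a direct unfolding of the definitions.
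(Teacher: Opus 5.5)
Your argument is correct and is essentially the paper's proof. Both fix some $x$ in the support of $Q$ and choose $I$ so that positions with $x[i]=0$ land on variables where $\sigma,\sigma'$ differ and positions with $x[i]=1$ land on variables where they agree, which forces $\sigma'[I]\oplus\sigma[I]\oplus x=(1,\dots,1)$. Your feasibility condition $|O|\le n-k$ is a step the paper leaves implicit, and it is genuinely needed. It holds whenever $n\ge 2k-1$, since $x\ne(1,\dots,1)$ gives $|O|\le k-1$. For $n=k$ and a $Q$ that avoids both constant patterns, the complement of $\sigma$ satisfies every clause, so the claim fails.
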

\begin{proof}
[Proof of Proposition~\ref{lem:santhos-impossibility}]
Assume that such $\sigma'$ exists. Let $x \in \cbr{0, 1}^{k}$ be any element from the support of $Q$. Since Hamming distance of $\sigma$ and $\sigma'$ is $k$, thus there exists an ordered set $I$ of different literals such that $\sigma(I[i]) = \sigma'(I[i])$ if and only if $x[i] = 1$. Since $x$ was chosen from the support, the clause $(\bI, \sigma[I] \xor x)$ has a positive probability of appearing in the random formula. However, based on the specific choice of the variables $I$,
if we XOR the sign patterns in this clause with the corresponding assignment under $\sigma'$, we obtain
$
    \sigma'[I] \xor \sigma[I] \xor x = \left(1, \ldots, 1\right)
    .
$
\redit{This means that no variable in the clause is satisfied under $\sigma'$.}
contradicting the assumption made at the beginning of the proof. 
\end{proof}

In contrast, in our approach for quiet planting, we deviate from using one distribution $Q$ in conjunction with all possible ordered sets of variables $I$. Rather, we allow the distribution of sign patterns to depend on $I$. This allows for better alignment of the sign patterns to the set of variables of a clause, which in turn allows us to plant assignments of arbitrary geometry, overcoming the counterexample of Proposition~\ref{lem:santhos-impossibility}. It also allows for the quiet planting of more than one solution. 
But at the same time, it poses new challenges of characterizing properties that would yield lower bounds on the statistical dimension of such a distribution and\redit{,} in consequence, using known reductions, would imply the hardness of distinguishing this distribution from the uniform one in the statistical queries model.

\section{Universal statistical lower bound for \boldmath $(r-1)$-wise uniform distribution}
Firstly, we discuss the property of a distribution over $k$-SAT clauses, called \textit{$(r-1)$-wise uniformness}, that is responsible for formalizing hardness.
This property assures that \redit{when fewer than $r$ variables are considered,} the marginal distribution on these variables is the uniform distribution. We refer the reader to Appendix~\ref{sec:cons} containing Definition~\ref{def:d-wise_independent} for a formal definition of this property.
Although our definition is based on the same notion in \cite{DBLP:conf/stoc/FeldmanPV15} and \cite{DBLP:conf/stoc/KothariMOW17} (in the former the notion appears as \enquote{distribution complexity}),
our definition generalizes the previous works by allowing for dependencies between the variables of a clause and possible negations added to the variables.
This change is crucial to present later a distribution that every clause is satisfied by a fixed $>1$ number of solutions\redit{,} which, as noted above, can not be done without the relaxed independence assumption. 

As the main result of this part, we prove that $(r-1)$-wise uniform is a property that implies a large statistical dimension of a distribution. As a large statistical dimension of a distribution yields lower bounds on the query complexity of statistical algorithms, this property is also ultimately responsible for the hardness of distinguishing the distributions constructed in this paper from the uniform random one.
In conclusion, we prove the following hardness result (See Appendix ~\ref{sec:lower-bound} for the formal proof).
This theorem generalizes Theorem 3.5 in \cite{DBLP:conf/stoc/FeldmanPV15}.
\begin{theorem}\label{thm:hardness-main}
  For every $(r-1)$-wise uniform distribution $\cQ$ over $X_k$,
  there exists a constant $\cc > 0$, depending on $k$, such that, for any $\qry \ge 1$,
  \begin{align*}
    \SDN\left(\B(\cDq, \cU), \frac{\cc(\log \qry)^{\rr/2}}{\nn^{\rr/2}}\right)\ge \qry 
  \end{align*}
  where $\cU$ is a uniform distribution over $X_k$ and $\cD_{\cQ} := \cbr{\cQ_{\tau}: \tau \in \{-1, 1\}^n}$ denote
the set of all XOR-rotation of distribution $\cQ$.
\end{theorem}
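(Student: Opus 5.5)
We follow the Fourier-analytic route of Feldman, Perkins and Vempala (their Theorem~3.5) and adapt it so that the sign-pattern distribution $Q_I$ may depend on the variable sequence $I$. The $(r-1)$-wise uniformity assumption (Definition~\ref{def:d-wise_independent}) is what replaces their assumption that the sign distribution $Q$ is fixed.

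\textbf{Step 1: Fourier expansion of the planted density.} For a fixed $\tau\in\{-1,1\}^n$, the density of $\cQ_\tau$ with respect to $\cU$ at a clause $(I,x)$ is $\frac{\Pr_\cQ[\bI=I]}{\Pr_\cU[I]}\,2^k Q_I(x\oplus\tau[I])$. We expand $2^kQ_I(y)=\sum_{S\subseteq[k]}\hat Q_I(S)\chi_S(y)$. We expect $(r-1)$-wise uniformity to give two facts: the marginal on $I$ is uniform, and $\hat Q_I(S)=0$ for every nonempty $S$ with $|S|<r$. The relative density is then
\[
1+\sum_{|S|\ge r}\hat Q_I(S)\,\chi_S(x)\,\chi_{I[S]}(\tau).
\]
Hence, for any query $h$ with $\|h\|_\cU=1$,
\[
\E_{\cQ_\tau}[h]-\E_\cU[h]=\sum_{\ell=r}^{k}\sum_{|S|=\ell}\E_{(I,x)\sim\cU}\bigl[h(I,x)\hat Q_I(S)\chi_S(x)\chi_{I[S]}(\tau)\bigr].
\]
Each $|S|=\ell$ term is a degree-$\ell$ multilinear polynomial in $\tau$, namely $p_{S}(\tau)=\sum_{J}c_{S,J}\chi_J(\tau)$ over $\ell$-subsets $J\subseteq[n]$, with coefficients $c_{S,J}=\E_\cU[h\,\hat Q_I(S)\chi_S(x)\,\mathbf 1\{I[S]=J\}]$.

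\textbf{Step 2: bounding the coefficients.} We apply Cauchy--Schwarz to the coefficients. Because $|\hat Q_I(S)|\le 1$ and the events $\{I[S]=J\}$ partition the clause space for fixed $S$, we get
\[
\sum_J c_{S,J}^2\le\sum_J \E_\cU\bigl[h^2\mathbf 1\{I[S]=J\}\bigr]\Pr_\cU[I[S]=J]\le \binom{n}{\ell}^{-1}\|h\|_\cU^2 .
\]
So every $p_S$ has $\ell_2$ Fourier mass $O(n^{-\ell/2})$. The dependence of $\hat Q_I$ on $I$ is harmless here, since it is absorbed into the coefficients; this is the essential difference from the fixed-$Q$ case.

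\textbf{Step 3: averaging over a large set of rotations.} Let $\cD'=\{\cQ_\tau:\tau\in T\}$ with $|T|\ge 2^n/q$. We must bound $\E_{\tau\in T}|p(\tau)|$, where $p=\sum_S p_S$. We use hypercontractivity (Bonami): for a degree-$\ell$ polynomial, $\|p_S\|_{2m}\le (2m-1)^{\ell/2}\|p_S\|_2$. Combined with Hölder against the indicator of $T$, which has density at least $1/q$, this gives
\[
\E_{\tau\in T}|p_S(\tau)|\le q^{1/(2m)}\|p_S\|_{2m}.
\]
Choosing $m\approx\log q$ yields a bound of $O((\log q)^{\ell/2}n^{-\ell/2})$. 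This is exactly the level-$\ell$ inequality argument of FPV.

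\textbf{Step 4: summing and concluding.} Summing over $\ell\ge r$ and the $2^k$ sets $S$, the $\ell=r$ term dominates, and we obtain $\kappa_2(\cD',\cU)\le c(\log q)^{r/2}/n^{r/2}$ with $c$ depending on $k$. The regime where $\log q$ is comparable to $n$ is trivial after adjusting $c$. This holds for every $T$ of size at least $2^n/q$, so by the definition of statistical dimension, $\SDN\ge q$.

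The main obstacle we anticipate is Step~1: we must verify that the paper's generalized $(r-1)$-wise uniformity actually forces both the uniform marginal on $I$ and the vanishing of $\hat Q_I(S)$ for $0<|S|<r$. If the definition only controls the joint marginals of $C[S]$, we will instead take expectations over $I$ and argue that the low-degree coefficients $c_{S,J}$ vanish after aggregation. This works because $\chi_{I[S]}(\tau)$ depends on $I$ only through $I[S]$.
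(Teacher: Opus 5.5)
Your proposal is correct and is essentially the paper's proof. It uses the same per-$I$ Fourier expansion of $Q_I$ (only $|S|\ge r$ survive), the same Cauchy--Schwarz bound on the coefficient norm of each degree-$|S|$ polynomial in $\tau$ via $\|h\|_{\cU}=1$, and then Lemma~\ref{lm:santosh}, which your hypercontractivity-plus-H\"older step simply reproves; indexing monomials by unordered sets $J$ is also slightly cleaner than the paper's ordered tuples $A$.

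The obstacle you flag is settled by Definition~\ref{def:d-wise_independent}. It requires $\bI$ to be uniform and every conditional $Q_I$ to be $(r-1)$-wise uniform, so $\hat Q_I(S)=0$ for $0<|S|<r$ pointwise in $I$. This matters, because your fallback of aggregating over $I$ would fail for queries $h$ that correlate with coordinates of $I$ outside $I[S]$.
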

Combining the above theorem with the reduction results from~\cite{DBLP:conf/stoc/FeldmanPV15} yields hardness for statistical queries using up to $n^{r/2}$ queries (see Section~\ref{sec:prelim}).

\section{Construction of \boldmath $(r-1)$-wise uniform distributions}
The second part of our approach involves constructing an $(r-1)$-wise uniform distribution $\cQ$ over the set of clauses that satisfies a given set of assignments, called $\cA$.
Let us fix an ordered set of $k$ variables $I$ in the clause $(\bI = I, \bx) \sim \cQ$ and let $Q_I$ denote the distribution of sign patterns $\bx \in \cbr{0, 1}^k$ conditioned on $I$.
The standard approach in prior work for ensuring a fixed assignment $\sigma \in \{0, 1\}$ satisfied $(\bI, \bx)$ is to let $Q_I$ be the uniformly random distribution over all these elements of $\{0, 1\}^{k}$ that agree with $\sigma[I]$ on an odd number of coordinates.
Intuitively, such constructed distribution 
is $(k-1)$-wise uniform because sampling from $Q_I$ is equivalent to sampling any $k-1$ coordinates uniformly at random in an arbitrary order and choosing the last one to ensure the number of agreements with $\sigma[I]$ is odd.
By the last property, any $\bx$ sampled from this distribution is satisfied by $\sigma$ because the number of coordinates agreeing with $\sigma$ cannot be $0$.

To generalize this idea into multiple solutions $\sigma_1, \dots, \sigma_{t}$, a natural approach is to select multiple subsets of $[k]$ as $S_1, \dots, S_t$ and require that for each $i\in [t]$, the sign pattern $\bx$ agrees with $\sigma_i$ on an odd number of coordinates of $S_i$. With such an $\bx$\redit{,} it is clear that all assignments are satisfied. However, even if we pick $\bx$ uniformly at random from all such sign patterns, it is not clear whether the $(r-1)$-wise uniformness property holds; indeed, the sign pattern $x[S_i]$ is definitely not uniform because it only takes half of all possible values.
Therefore, the distribution cannot be $(r-1)$-wise uniform for $r \ge \min |S_i|$.

To solve the above issue, 
a natural idea is to choose $|S_i| \ge r$; however, this alone is not sufficient. For example, suppose $S_1$ and $S_2$ both have size $\ge r$ but differ by only one coordinate: $s_1 \in S_1 \setminus S_2$ and $s_2 \in S_2 \setminus S_1$. Assume $\sigma_1$ and $\sigma_2$ agree on $S_1 \cap S_2$. Since each $S_i$ must have an odd number of agreements with $\sigma_i$, and $\sigma_1, \sigma_2$ agree on the intersection, it is impossible for $x$ to agree with $\sigma_1$ on $s_1$ but disagree with $\sigma_2$ on $s_2$. Specifically, agreement on $s_1$ implies an even number of agreements on $S_1 \cap S_2$, which in turn implies agreement on $s_2$. Thus, the distribution is not 2-wise uniform, as the set $S' = \cbr{s_1, s_2}$ does not take all possible values.

Intuitively, the issue in the above example is excessive dependence among the sets $S_i$. We show that avoiding dependencies of size $r$ when choosing the $S_i$ is essentially equivalent to designing a linear code. Specifically, suppose there exists a $[k, t, r]$ binary linear code with generator matrix $V \in \{0,1\}^{t \times k}$ (a code with $k$-bit codewords, message length $t$, and minimum Hamming distance $r$). We use this code to construct a distribution $Q_I$ that is $(r-1)$-wise uniform and has $t$ distinct solutions by taking each $S_i$ to be the set of $1$-coordinates in the $i^\text{th}$ row of $V$.

We further propose a novel idea to extend the number of solutions from $t$ to up to $L = 2^t-1$. 
Our main insight is that for each row of $V$, instead of satisfying only one $\sigma_i$, we can satisfy at least half of the remaining solutions. 
This allows us to divide the number of unsatisfied solutions in each row, ensuring that all solutions are satisfied since $L \le 2^{t}-1$.
Figure \ref{fig:linear-code} illustrates the use of linear code to achieve a $(r-1)$-wise uniform distribution, satisfying $2^t-1$ solutions.

\begin{figure}[ht]
\centering
\begin{tikzpicture}[scale=0.6]
\matrix [matrix of math nodes,left delimiter={[},right delimiter={]}] (m) {
v_{1,1} & v_{1,2} & \ldots & v_{1,k} \\
v_{2,1} & v_{2,2} & \ldots & v_{2,k} \\
\vdots & \vdots & \ddots & \vdots \\
v_{t,1} & v_{t,2} & \ldots & v_{t,k} \\
};
\foreach \i/\txt/\margin in {1/V_1^T/0.3, 2/V_2^T/0.3, 3/\vdots/0.7, 4/V_t^T/0.3}
  \node[anchor=east] at ($(m-\i-1.west) - (+\margin, 0)$) {$\txt$};
\matrix [right=1em of m,matrix of math nodes,left delimiter={[},right delimiter={]}] (m2) {
x_{1} \\
x_{2} \\
x_{3} \\
\vdots \\
x_{k} \\
};
\matrix [right=2em of m2,matrix of math nodes,left delimiter={[},right delimiter={]}] (m3) {
y_{1} \\
y_{2} \\
\vdots \\
y_{t} \\
};
\node [right=0.25em of m2, yshift=0cm] {$=$};
\foreach \i/\txt/\col in {1/S_1/MidnightBlue, 2/S_2/BrickRed, 3/\vdots/black, 4/S_t/OliveGreen}
  \node[\col,anchor=east] (\i) at ($(m3-\i-1.east) + (+2, 0)$) {$\txt$};
\foreach \i/\txt/\color/\dx/\dy in {1/S_1/MidnightBlue/1.5/0.3, 2/S_2/BrickRed/1.3/1.1, 4/S_t/OliveGreen/1.9/0.4} {
  \draw[->,color=\color] (\i) -- ($(\i) - (1.5, 0)$);
  \draw[->,color=\color] (\i) -- ($(\i) + (\dx, -\dy)$);
};
    \def\r{1 and 2};
    \def\t{100}
    \node (c) at (11.2, 0) {};
    \draw (c) circle (1 and 2);
    \node at ($(c) + (0, 2.3)$) {$\sigma_1, \ldots, \sigma_{2^t-1}$};
    \draw[pattern=north west lines, pattern color=MidnightBlue!\t] ($(c) + (180:\r)$) arc (180:0:\r) --  ($(c) + (180:\r)$);
    \draw[pattern=north east lines, pattern color=BrickRed!\t] ($(c) + (210:\r)$) arc (210:180:\r) -- ($(c) + (0:\r)$) -- ($(c) + (0:\r)$) arc (0:-30:\r) -- ($(c) + (210:\r)$);
    \draw[pattern=north west lines, pattern color=OliveGreen!\t] ($(c) + (250:\r)$) arc (250:290:\r) --  ($(c) + (250:\r)$);
    \node at ($(c) - (0, 1.25)$) {\vdots};
\end{tikzpicture}
\caption{This figure illustrates how a $[k, t, r]$ linear code results in a $(r-1)$-wise uniform distribution $Q_I$. The left matrix shows a code matrix $V$. For each row of $V$ denoted as $V_i^T$, the set of 1 coordinates of $V_i^T$ is $S_i$. For $S_i$, we can ensure that $x$ agrees with half of the remaining solutions by setting $y_i$ to a proper value. It is shown in the figure that the first at least half of solutions are satisfied by \textcolor{MidnightBlue}{$S_1$}, the next at least half of the remaining solutions are satisfied by \textcolor{BrickRed}{$S_2$}, and so on. Therefore, this approach supports up to $2^t-1$ solutions.}
    \label{fig:linear-code}
\end{figure} 

Combining these two, we \redit{present} the following theorem (with the proof in Appendix~\ref{sec:cons}):
\begin{theorem}\label{lem:marginal-to-codes-main}
    Suppose a $[k, t, r]$ linear code exists. Then for any set $I$ of $k$ variables and any collection $\cA$ of assignments with $|\cA| \le 2^t - 1$, there exists a distribution $\cQ_I$ over sign patterns on $I$ such that:
    \begin{enumerate}
        \item[(i)] For every sign pattern $x$ in the support of $\cQ_I$, the clause $C = (I, x)$ is satisfiable by some assignment in $\cA$, and
        \item[(ii)] If $x \sim \cQ_I$, then the marginal distribution on any subset of $r - 1$ coordinates of $x$ is uniform.
    \end{enumerate}
\end{theorem}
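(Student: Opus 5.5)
The plan is to take $\cQ_I$ to be the uniform distribution on a single coset of the kernel of the generator matrix. This coset is exactly an affine translate of the dual code. Fix a generator matrix $V\in\{0,1\}^{t\times k}$ of the $[k,t,r]$ code $\code$, with rows linearly independent over $\mathbb{F}_2$. For $y\in\{0,1\}^t$ let $W_y=\{x\in\{0,1\}^k : Vx=y\}$. Since $V$ has rank $t$, every $W_y$ is nonempty, and it is a coset of $\ker V=\code^{\perp}$. I will define $\cQ_I$ as the uniform distribution on $W_{y^*}$ for one well-chosen $y^*$. The row-by-row halving described around Figure~\ref{fig:linear-code} amounts to choosing the bits $y_1,\dots,y_t$ of $y^*$ one at a time. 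I would present it as a single counting step instead.

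\textbf{Step 1 (satisfiability, item (i)).} By the clause convention, $(I,x)$ is falsified by $\sigma$ exactly when $x[i]\neq\sigma[I[i]]$ for all $i$. That is, when $x=b_\sigma:=-\sigma[I]$, the coordinatewise negation. So it suffices that no $b_\sigma$ with $\sigma\in\cA$ lies in $W_{y^*}$, i.e.\ $Vb_\sigma\neq y^*$ for all $\sigma\in\cA$. The vectors $\{Vb_\sigma:\sigma\in\cA\}$ occupy at most $|\cA|\le 2^t-1$ of the $2^t$ points of $\{0,1\}^t$. Hence some $y^*$ avoids all of them. With this choice, every $x$ in the support is satisfied by every assignment in $\cA$, which is stronger than needed.

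\textbf{Step 2 (uniform marginals, item (ii)).} Let $T\subseteq[k]$ with $|T|\le r-1$, and let $\pi_T$ denote the projection onto the coordinates in $T$. A uniform point of an affine space $x_0+\code^\perp$ projects to a uniform point of $\pi_T(x_0)+\pi_T(\code^\perp)$. This follows because $\pi_T$ restricted to $\code^\perp$ is a linear map, so all its fibres have equal size. So it suffices to show that $\pi_T(\code^\perp)=\{0,1\}^T$. If not, there is a nonzero $u\in\{0,1\}^T$ orthogonal to $\pi_T(\code^\perp)$. Extending $u$ by zeros to $\tilde u\in\{0,1\}^k$ gives $\tilde u\perp \code^\perp$, so $\tilde u\in(\code^\perp)^\perp=\code$. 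But $\tilde u$ is a nonzero codeword of weight at most $|T|\le r-1<r$, contradicting minimum distance $r$. Therefore every set of at most $r-1$ coordinates of $x\sim\cQ_I$ is uniformly distributed.

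The only step needing care is Step 2, the duality argument, including the identity $(\code^\perp)^\perp=\code$ over $\mathbb{F}_2$, which holds by dimension counting. I expect no real obstacle there. The remaining points are bookkeeping: translating between the $\{0,1\}$ and $\{-1,1\}$ conventions, and noting that the construction depends on $I$ only through $\sigma[I]$. That dependence is what later lets the per-$I$ distributions assemble into the clause distribution $\cQ$ used in Theorem~\ref{thm:hardness-main}.
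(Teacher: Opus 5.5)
Your proof is correct and uses the same construction as the paper, namely the uniform distribution on the solution set of $Vx=y$. Your pigeonhole choice of $y^*$ outside the at most $2^t-1$ forbidden values $Vb_\sigma$ is a compressed form of the paper's greedy row-by-row halving, which picks the bits of such a $y$ one at a time, and it also avoids the paper's subcode fix when that procedure stops early. Likewise, your duality argument that $\pi_T(\code^\perp)$ is all of $\{0,1\}^T$ is equivalent to the paper's Lemma~\ref{cla:lin-ind} (linear independence of $v_1,\dots,v_t,e_{\ell_1},\dots,e_{\ell_{r-1}}$) combined with its Rouch\'e--Capelli count of $2^{k-t-(r-1)}$ solutions per fixed pattern.
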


We can now apply any linear code to \redit{balance} the number of possible assignments and the complexity of the distribution. The Gilbert-Varshamov bound for linear codes leads to the following bound.

\begin{theorem}
For a given $k$, $\delta_0$, where $0 \le \delta_0 \le \frac{1}{2}$, and a given set of assignments $\mathcal{A}$ of size at most $2^{k(1 - H(\delta_0))} - 1$, there exists a  $(\delta_0 k - 1)$-wise uniform distribution $\mathcal{Q}$ over $k$-clauses, where $H(x) = - x \log_2(x) - (1-x) \log_2(1-x))$.
\end{theorem}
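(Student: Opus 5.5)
This result follows by combining the Gilbert--Varshamov bound with Theorem~\ref{lem:marginal-to-codes-main}.

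The plan has three steps.

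\textbf{Step 1: obtain a code.} Apply the Gilbert--Varshamov bound to get a binary linear code of length $k$, with minimum distance $d \ge \delta_0 k$ and dimension $t \ge k(1-H(\delta_0))$. The standard statement gives $t \ge k - \log_2 \sum_{i<d}\binom{k}{i}$. Combined with $\sum_{i\le \delta_0 k}\binom{k}{i}\le 2^{kH(\delta_0)}$ for $\delta_0\le 1/2$, this yields the dimension bound.

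\textbf{Step 2: handle integrality and the extremes.} Take $r=\lceil \delta_0 k\rceil$. If $t$ is not an integer, pass to an integer $t'$. The hypothesis $|\cA|\le 2^{k(1-H(\delta_0))}-1$ then gives $|\cA|\le 2^{t'}-1$, because $|\cA|$ is an integer. If the code has distance larger than $r$, the conclusion only gets stronger: $(d-1)$-wise uniformity implies $(r-1)$-wise uniformity, since marginals of uniform marginals are uniform.

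The boundary cases are degenerate. At $\delta_0 = 1/2$ we have $H=1$, so $|\cA|\le 0$; then any distribution works, for instance the uniform one. At $\delta_0=0$ the requirement is $(-1)$-wise uniformity, which is vacuous, and then $t=k$.

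\textbf{Step 3: build $\mathcal{Q}$ clause by clause.} For every variable sequence $I\in\cI_k$, invoke Theorem~\ref{lem:marginal-to-codes-main} with this $[k,t,r]$ code and the set $\cA$. This produces a distribution $\cQ_I$ on sign patterns whose marginals on any $r-1$ coordinates are uniform, and whose support consists of clauses satisfied by assignments of $\cA$.

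Define $\mathcal{Q}$ by sampling $\bI$ uniformly from $\cI_k$ and then $\bx\sim \cQ_{\bI}$. I would then check Definition~\ref{def:d-wise_independent}. For any $S\subseteq[k]$ with $|S|\le r-1$, the marginal of $(\bI[S],\bx[S])$ factors: $\bI[S]$ is uniform, and conditioned on $\bI$, the pattern $\bx[S]$ is uniform. Hence the marginal equals that of the uniform clause distribution. The case $|S|<r-1$ follows by further marginalizing.

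\textbf{Expected obstacle.} The main difficulty should be bookkeeping rather than substance. The definition may be phrased as conditional uniformity given $I$, or as joint uniformity; either is satisfied by the factorization above. The other point requiring care is matching the rounding in the Gilbert--Varshamov parameters, namely $\delta_0 k$ versus $\lceil\delta_0 k\rceil$, and $t$ versus $\lfloor t\rfloor$. I would state the result with $r=\lceil\delta_0 k\rceil$, so that $(\delta_0 k-1)$-wise uniformity is implied.
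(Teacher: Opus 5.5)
Your proposal is correct and follows the same route as the paper: a Gilbert--Varshamov code, then the code-based construction of Theorem~\ref{lem:marginal-to-codes-main} applied to a uniformly random $I$. It is in fact more careful than the paper about taking $r=\lceil\delta_0 k\rceil$ and about the endpoints $\delta_0\in\{0,1/2\}$.

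One small correction: no rounding of $t$ is needed, because a code dimension is already an integer and the Varshamov form you cite gives an integer $t\ge k(1-H(\delta_0))$, so $2^t-1\ge|\cA|$ follows directly. Your justification ``because $|\cA|$ is an integer'' would be wrong if $t'$ meant rounding down; for example, $k(1-H(\delta_0))=2.5$ allows $|\cA|=4>2^2-1$.
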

This result shows that a linear reduction in complexity allows an exponential expansion in the set of assignments. Although this theorem does not match the bound provided in \cite{DBLP:conf/stoc/FeldmanPV15}, our linear code planting technique allows planting a solution in up to $n^{k/2}$ clauses using the \emph{repetition code}, whose generator vector is $V = [1, 1, \dots, 1]$. The equation $Vx = y$ holds when the parity of $x$ matches $y$, equivalent to the XOR-SAT planting method in~\cite{DBLP:conf/stoc/FeldmanPV15}.

\section{Worst-case analysis of SAT instances with multiple solutions}

Lastly, we introduce a framework for assessing the hardness of SAT instances with numerous solutions, focusing on scenarios where the SAT problem is either unsolvable or has a predetermined minimum number of solutions. This is achieved by exponentially increasing the number of potential solutions and leveraging the Exponential Time Hypothesis (ETH), which is particularly effective in tackling the open problem posed by Hsieh, Mohanty, and Xu \cite{certsol} (CCC'22).

\begin{theorem}
\label{thm:worst-case-main}
Assuming the Exponential Time Hypothesis (ETH), there is no polynomial-time algorithm for solving \(\SAT(n, 2^{n - \log(n)^{1+c}})\) for any \(c > 0\), even when it is guaranteed that the input is (7/8 + \(\epsilon\))-satisfiable.
Note that $\SAT(n, s)$ is the class of SAT instances with $n$ variables that are either unsatisfiable or have at least $s$ solutions.
\end{theorem}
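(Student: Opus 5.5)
The plan is a padding reduction from 3-SAT. The time budget of the hypothetical algorithm is polynomial in $n$, while the 3-SAT instance is embedded on only $N=\delta(\log n)^{1+c}$ variables. Fresh variables that are left (almost) unconstrained then create the required huge number of solutions. Separately, a block of easy gadget clauses enforces the $(7/8+\epsilon)$-satisfiability promise for every produced instance, satisfiable or not.

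\textbf{Step 1 (source problem).} By ETH together with the sparsification lemma, there is a constant $\gamma>0$ such that 3-SAT on $N$ variables and $m=O(N)$ clauses has no $2^{\gamma N}$-time algorithm. Given such a formula $\varphi$ on $N$ variables, choose $n$ with $N=\delta(\log n)^{1+c}$, where the small constant $\delta>0$ is fixed in Step 3. Then $n=2^{(N/\delta)^{1/(1+c)}}=2^{o(N)}$. Hence the reduction runs in time $\mathrm{poly}(n)$, and any $\mathrm{poly}(n)$-time algorithm runs in time $2^{O(\log n)}=2^{O(N^{1/(1+c)})}=2^{o(N)}$. Deciding $\varphi$ this way contradicts ETH.

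\textbf{Step 2 (construction).} Keep the $m$ clauses of $\varphi$ on variables $z_1,\dots,z_N$. Add $K=\lceil C m\rceil$ gadget clauses $(y_{j,1}\vee y_{j,2}\vee y_{j,3})$ on pairwise disjoint fresh triples, where $C$ is a large constant. Add the remaining $n-N-3K$ variables as free padding variables. These appear in no clause (if every variable must occur, attach them one at a time to further disjoint gadgets; this changes nothing below).

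\textbf{Step 3 (counting solutions).} If $\varphi$ is unsatisfiable, the instance is unsatisfiable, because the $z$-variables are disjoint from all other variables. If $\varphi$ is satisfiable, each satisfying assignment of $\varphi$ extends to $(7/8)^K 2^{3K}\cdot 2^{n-N-3K}$ solutions. The number of solutions is therefore at least
\[
2^{\,n-N-K\log_2(8/7)}\ \ge\ 2^{\,n-O(N)}.
\]
Choosing $\delta$ small enough that the $O(N)$ term is at most $(\log n)^{1+c}$ gives at least $2^{n-(\log n)^{1+c}}$ solutions. So the instance lies in $\SAT(n,2^{n-\log(n)^{1+c}})$, and it is satisfiable iff $\varphi$ is.

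\textbf{Step 4 (the satisfiability promise).} This is the step needing care, since the guarantee must hold even when $\varphi$ is unsatisfiable. Set every gadget clause true, e.g.\ all $y_{j,1}=1$. A uniformly random assignment to the $z$-variables satisfies each 3-clause of $\varphi$ with probability $7/8$. So some assignment satisfies at least $\tfrac78 m$ of them. The total satisfied fraction is then at least
\[
\frac{K+\tfrac78 m}{K+m}=\frac78+\frac{K/8}{K+m}\ \ge\ \frac78+\epsilon
\]
once $C\ge 8\epsilon/(1-8\epsilon)$ (for $\epsilon<1/8$). The main obstacle is reconciling this requirement $K=\Theta(m)$ with the solution-count requirement. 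Each gadget clause costs a factor $7/8$ in the count, so $K$ must stay $O(\log^{1+c}n)$. This is exactly why sparsification ($m=O(N)$) is used, and why $N$ is taken to be a small constant times $(\log n)^{1+c}$ rather than $(\log n)^{1+c}$ itself.
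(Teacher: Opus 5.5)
Your proof is correct for the statement as written. It differs from the paper's in where the $(7/8+\epsilon)$ promise comes from.

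\textbf{The paper's route.} The paper inherits the promise from its source problem. It cites NP-hardness of distinguishing $(7/8+\epsilon)$-satisfiable from fully satisfiable 3-SAT instances, then pads only with free variables. The clause set, and hence the satisfiable fraction, is unchanged, and the padding costs nothing in the solution count.

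\textbf{Your route.} You start from plain sparse 3-SAT under ETH and \emph{manufacture} the promise with $\Theta(m)$ trivially satisfiable gadget clauses. Each gadget costs a factor $7/8$ in the count, which is why you need sparsification and the constant $\delta$.

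\textbf{What each buys.} Your argument is self-contained, uses only ETH plus sparsification, and makes the ETH transfer explicit. That transfer is the delicate point the paper's sketch skips. Plain NP-hardness of the source problem does not give the $2^{\omega(N^{1/(1+c)})}$ lower bound that padding to $N=(\log n)^{1+c}$ needs for every $c>0$; for Håstad-type gap hardness one would need near-linear-size PCPs. Your construction also shows that the literal promise can always be arranged, so it imposes no real restriction. Conversely, the paper's route could give more if its source is Håstad's gap theorem and the ETH transfer is done properly: its unsatisfiable outputs would stay at most $(7/8+\epsilon)$-satisfiable. Your gadgets cannot give that gap conclusion, since a padded unsatisfiable $\varphi$ can be $\bigl(1-\tfrac{1}{K+m}\bigr)$-satisfiable.

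\textbf{Small repairs.}

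First, the $7/8$ bound for the clauses of $\varphi$ needs exactly three distinct variables per clause, and sparsification does not guarantee this. Taking $C\ge(7/8+\epsilon)/(1/8-\epsilon)$ lets the gadgets alone meet the promise.

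Second, your parenthetical about attaching padding variables to ``further disjoint gadgets'' is wrong as stated, since $\Theta(n)$ disjoint gadgets cost $\Theta(n)$ bits of the count. It is also unnecessary, because free variables are allowed, as in the paper's own proof. If every variable must occur, use clauses sharing literals, such as $(p_i\vee y_{1,1}\vee y_{1,2})$.

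The same trick applies to your main gadgets. Use clauses $(w\vee y_j\vee y_j')$ with one shared fresh $w$; all of them are satisfied by $w=1$, so together they cost at most one bit. This removes the need for sparsification and for $\delta$, so the obstacle you highlight comes only from making the gadgets disjoint.
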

We further introduce a randomized algorithm for solving \(\SAT(n, 2^{n - \log(n)})\). The algorithm randomly selects assignments until a satisfying assignment is found or a predefined threshold of attempts is reached. If a solution exists, each random assignment satisfies the instance with probability \(\frac{1}{n}\).
By checking multiple random assignments, if we find a satisfying assignment, we conclude that the instance is satisfiable; otherwise, whp, it is unsatisfiable. Proofs are provided in Appendix~\ref{sec:worst-case}.

\section{Related work}
\label{sec:related}
Planting has recently been the subject of research in various problems, such as planting structures in graphs like cliques \cite[]{DBLP:conf/stoc/BuhaiKS23,DBLP:conf/colt/KothariVWX23,DBLP:conf/colt/BandeiraBKMW21} or planting solutions for satisfiability problems \cite[]{DBLP:conf/focs/GuruswamiHKM23}.
In the context of planted \( k \)-SAT, the development of hard-to-solve, quietly planted instances has motivated researchers from diverse fields, including theoretical computer science and statistical physics.
A common way to test SAT solvers is to use random instances—often generated by selecting an assignment $\sigma$ and randomly choosing clauses, rejecting those unsatisfied by $\sigma$. These instances are typically easy for local search approaches. Prior works~\cite{DBLP:conf/aaai/JiaMS05, barthel2002hiding}, experimentally studied generating quietly planted instances that are empirically challenging for such local methods or for other known $k$-SAT solvers. Another line of work has tried to build a theory for the hardness of instances with a single planted solution~\cite{DBLP:journals/corr/abs-0901-2130, DBLP:journals/jsat/KrzakalaMZ14, DBLP:conf/stoc/FeldmanPV15}. The work of~\cite{DBLP:conf/innovations/Blocki0DV17} extends this concept beyond predicates over $\mathbb{F}_2$.

Over the years, constructing planting distributions for the $k$-SAT problem with \textit{multiple} solutions has been explored experimentally~\cite{DBLP:journals/dke/LiuLY15, DBLP:journals/jair/ZhaoLLXJH23} and theoretically for up to two solutions or in hypothesis testing settings~\cite{DBLP:conf/aaai/AchlioptasJM04, DBLP:conf/aistats/BerthetE19}. However, our work is the first to plant more than two solutions of arbitrary geometry and establish theoretical hardness for this class of instances.

Other research efforts have focused on algorithms for the planted $k$-SAT problem. Spectral methods were studied in~\cite{DBLP:conf/soda/Flaxman03}. Bogdanov and Qiao~\cite{DBLP:conf/approx/BogdanovQ09} demonstrated the applicability of a Semidefinite Programming (SDP)-based algorithm by Charikar and Wirth~\cite{DBLP:conf/focs/CharikarW04}, which efficiently identifies the planted assignment for predicates lacking pairwise independence using only $O(n)$ evaluations. Their approach also extends to recovering inputs for $(r-1)$-wise (but not $r$-wise) independent predicates with $O(n^{r/2})$ evaluations~\cite{DBLP:journals/cc/Applebaum16}. These results enhance the toolkit for recovering planted assignments and deepen understanding of algorithmic strategies for planted $k$-SAT. The analysis also sheds light on challenges posed by planted instances, with implications for complexity theory, cryptography~\cite{DBLP:journals/iacr/Goldreich00a}, and algorithms for optimization problems with planted solutions~\cite{DBLP:conf/stoc/FeldmanGRVX13}. A key milestone by~\cite{DBLP:journals/corr/FeldmanPV14} unified prior efforts under a model covering specific planted satisfiability distributions. Their algorithm recovers planted solutions in both the stochastic block model and planted CSPs via a common generalization using random bipartite graphs. 

Finally, many works have developed the theory of hardness under other plausible conjectures in the complexity theory. The result of~\cite{DBLP:conf/stoc/KothariMOW17} gives a lower bound on the refutation of any CSP problem whose predicates are supported by $(r-1)$-wise uniform distributions for the class of sum of squares algorithms. This result is asymptotically tight due to the upper bounds given in~\cite{DBLP:conf/focs/AllenOW15, DBLP:conf/stoc/RaghavendraRS17}.

\acks{This work is partially supported by DARPA QuICC, DARPA expMath, ONR MURI 2024 award on Algorithms, Learning, and Game Theory, Army-Research Laboratory (ARL) grant W911NF2410052, NSF AF:Small grants 2218678, 2114269, 2347322.}

\bibliography{ref}

\appendix

\section{Prelimineries - Linear Codes}
\label{apx:prelim}
Linear codes, utilized in information and coding theory, are error-correcting codes crucial for dependable data transmission over noisy channels. Here, we present essential definitions concerning linear codes for future reference, an exhaustive summary can be found in~\cite{DBLP:books/lib/Berlekamp68}.

The formal definition of a code in information theory can be summarized as follows.
\begin{definition}[Code]
An error-correcting code $\code$ of length $n$ over a finite alphabet $\Sigma$ is a subset of $\Sigma^n$. The elements of $\code$ are termed codewords, and the
length $n$ of the codewords in $\code$ is the block length.

The alphabet of $\code$ is $\Sigma$, and if $|\Sigma| = q$, $\code$ is referred to as a $q$-ary code. For $q = 2$, it is a binary code. 

A code is associated with an encoding map $E$, which maps the message set $M$ of size $|\code|$ to codewords in $\Sigma^n$. The code is then the image of this encoding map.
\end{definition}
Here is the definition of the rate of a code:
\begin{definition}[Rate]
\label{def:lc-rate}
The rate of a code $\code \subseteq \Sigma^n$, denoted $R(\code)$, is defined by
$$R(\code) = \frac{\log |\code|}{n \log |\Sigma|}.$$
\end{definition}
The dimension of $\code$ is defined as $\log_{|\Sigma|} {|\code|}$. It is noteworthy that a $q$-ary code of dimension $l$ comprises $q^l$ codewords. In essence, $R(\code)$ represents the amount of non-redundant information per bit in the codewords of $\code$.

Next, we define the distance of a code as follows. We start by providing the definition of Hamming distance.
\begin{definition}[Hamming distance]
The Hamming distance between two strings $x$ and $y$ of equal length over a finite alphabet $\Sigma$, denoted $\Delta(x, y)$, counts the positions where the strings differ, i.e., $\Delta(x, y) = |\{i \,|\, x_i \neq y_i\}|$. The fractional Hamming distance or relative distance between $x$ and $y \in \Sigma^n$ is expressed as $\delta(x, y) = \frac{\Delta(x, y)}{n}$, ranging from 0 to 1 indicates the proportion of symbols that differ between the two strings, providing a normalized measure of their dissimilarity.
\end{definition}

\begin{definition}[Distance]
\label{def:lc-dis}
The distance of a code $\code$, denoted $\Delta(\code)$, is the minimum Hamming distance between two distinct codewords of $\code$, given by:

\[
\Delta(\code) = \min_{\substack{\code_1, \code_2 \in \code\\\code_1 \neq \code_2}} \Delta(\code_1, \code_2)
\]

In particular, for every pair of distinct codewords in $\code$, the Hamming distance between them is at least $\Delta(\code)$. The relative distance of $\code$, denoted $\delta(\code)$, is defined as the normalized quantity $\frac{\Delta(\code)}{n}$, where $n$ is the block length of $C$. Thus, any two codewords of $\code$ differ in at least a fraction $\delta(\code)$ of positions.
\end{definition}

Finally, we define a linear code.
\begin{definition}[Linear Code]
A linear code \(\code\) over a field \(\Sigma\) and of length \(n\) is a subspace of \(\Sigma^n\). For such a code, if the field size is \(q\), the code is termed a \(q\)-ary linear code. The notation \([n, k]_q\) represents a \(q\)-ary linear code of block length \(n\) and dimension \(k\), and if the code also specifies a minimum Hamming distance \(d\), it is denoted as \([n, k, d]_q\) code.
\end{definition}

In more detail, a field \(\Sigma\) is a set equipped with two operations (addition and multiplication) that adhere to the rules of arithmetic, including the presence of additive and multiplicative inverses for every element. A subspace of \(\Sigma^n\), where \(n\) represents the length of codewords, is a set of sequences over \(\Sigma\) that is closed under these operations, meaning any linear combination of elements (codewords) within the subspace remains in the subspace. The dimension \(k\) of a code \(C\) reflects the number of linearly independent vectors in \(C\), indicating the minimum set of vectors required to generate all elements of \(C\) through linear combinations.

For this paper, we only use binary codes, i.e.\redit{,} $\Sigma = \mathbb{F}_{2}$. Therefore, we omit the subscript in the notation of a linear code\redit{,} and we use $[n, k, d]:= [n, k, d]_{2}$.

\section{Construction of \boldmath$(r-1)$-wise uniform distributions - Proof of Theorem~\ref{lem:marginal-to-codes-main}}
\label{sec:cons}
In this part, we present a construction of a distribution $\cQ$ over the set of clauses $X_{k}$ that ensures the following two (informal) properties.
\newcommand{\numsol}{L}
\begin{enumerate}
  \item \textbf{Planted solutions.} Any instance $\FF$ consisting of clauses drawn from $\cQ$ has some $\LL$ distinct solutions $\sigma_{1}, \dots, \sigma_{\LL}$ with probability $1$. We will formalize the specific number of solutions later in this section. 
  \item \textbf{Indistinguishability.} It is \enquote{difficult} to distinguish $\FF$ from an instance generated uniformly at random provided that the number of clauses in $\FF$ is limited. We will formalize this notion by providing statistical query lower bounds for the corresponding testing problem in Section~\ref{sec:lower-bound}.
\end{enumerate}

For the first property
we require $\cQ$ to be supported
over clauses satisfied
by some fixed assignments $\sigma_1, \dots, \sigma_{\LL}$. For the second property to hold,
we require that for some parameter $r$, the marginal distribution of $\cQ$ on subsets of less than
$r$ variables are uniform.
Formally, we define the notion of $(r-1)$-wise uniform distributions as follows.
\begin{definition}\label{def:d-wise_independent}
We say a distribution $Q$ over $\range^k$ is \emph{$(r-1)$-wise uniform} if
for any subset $S \subseteq [k]$ of size $|S| < r$, the marginal distribution $Q[S]$ is uniform.
We say a distribution $\cQ$ over clauses is $(r-1)$-wise uniform if
the distribution of $\bI$ where $(\bI, \bx) \sim \cQ$ is uniform over $\cI$ and 
for any $I$, 
the distribution $Q_{I}$ is $(r-1)$-wise uniform,
where we recall that $Q_{I}$ denotes
the conditional distribution $(\bx \mid \bI=I)$ for
$(\bI, \bx) \sim \cQ$ (see Section~\ref{sec:prelim}).
The complexity of a distribution is the largest integer $r$ for which the distribution is $(r-1)$-wise uniform.
\end{definition}

Definition~\ref{def:d-wise_independent} generalizes
the definition of distribution complexity
in \cite{DBLP:conf/stoc/FeldmanPV15} and
$(r-1)$-wise uniformness in 
\cite{DBLP:conf/stoc/KothariMOW17}.
While these works only consider these definitions for distributions $Q$ over $\range^k$, we have extended it to distributions over clauses. 
For the special case studied by these works where $\cQ$ is a product distribution (i.e., $\bI$ and $\bx$ are sampled independently), the two definitions coincide. 

Utilizing the generality of the above definition, we will construct distributions according to the following recipe. First, a sequence of $k$ different variables of a clause, denoted by $\bI$\redit{,} is sampled uniformly, without replacement from the set $[n]$. 
Then, conditioned on the set $\bI$, a sign pattern $x$ is sampled according to a marginal distribution $Q_{I}$, to be specified later. 
The final clause is a tuple of the form $\bC := (\bI, \bx)$. The marginal distribution is chosen such that the final clause is satisfied by all the assignments from the predetermined set with probability $1$ and such that it is $(r - 1)$-wise uniform. The generality is crucially used in the fact that the distribution of the sign pattern can depend on the set of variables.

Formally, we can define the problem of finding the right distribution using linear programming. Fix an arbitrary set of assignments $\sigma_1, \dots, \sigma_{\LL} \in \{-1, 1\}^{n}$.
Definition~\ref{def:d-wise_independent} requires that for any $I$, the marginals of $Q_{I}$
over less than $r$ variables are uniform. 
This, in turn, is equivalent to the Fourier coefficients of $Q_{I}$
being zero for all non-empty subsets of size less than $r$, i.e.,
\begin{align}
  \hat{Q_I}(S) := \frac{1}{2^k}\sum_{x \in \{-1, 1\}^k} \rbr{Q_I(x) \chi_{S}(x)} = 0 \quad \forall  0 < |S| < r,
  \label{eq:fourier_zero}
\end{align}
where $\chi_{S}(x) := \xor_{s \in S} {x}$ denotes the XOR of $x[S]$.
In order to specify $\cQ$, it suffices to specify a distribution $Q_I$ for any $I$
such that:\\
\vspace{1mm}\noindent \textbf{(a)} with probability $1$ the clause $(I, \bx)$ for $\bx \sim Q_I$ satisfies
$\sigma_1, \dots, \sigma_{\LL}$,\\
\vspace{1mm}\noindent \textbf{(b)}
$Q_I$ satisfies Equation~\eqref{eq:fourier_zero}.\\
\vspace{1mm}\noindent Condition \textbf{(a)} is equivalent to requiring
the probability assigned under $Q_I$ to $-\sigma_i[I]$  to be $0$
as this is the unique sign pattern for which $\sigma_i$ is not feasible.
As for \textbf{(b)}, if we view $Q_I(x)$ as variables for $x \in \cbr{-1, 1}^{k}$, then
\eqref{eq:fourier_zero} turns into a linear equation for $Q_I(x)$. 
Putting everything together,
we formulate the problem of finding a suitable $Q_I$ as the following linear program where
the variables are $Q_I(x)$ for $x \in \{-1, 1\}^k$:
\begin{align}
\centering
\text{Find a distribution}\quad \quad &  Q_{I} \tag{LP1} \label{probl:LP}  \\
 \text{subject to}\quad \quad &  \underset{0 < |S| < r }{\text{\Large $\forall$}} \redit{\hat{Q}_I}(S) = 0, \label{lp:line-1}
 \\
 \notag
 &  \underset{i \in [\LL]}{\text{\Large $\forall$}} 
 Q_{I}(-\sigma_i[I]) = 0, \label{lp:line-2}\\
 &\sum_{x} Q_I(x) = 1 \text{ and } Q_I \ge 0   
 \notag
\end{align}

This way, in order to sample a clause $(\bI, \bx)$ from $\cQ$, we first sample $\bI$ uniformly at random.
We then solve linear program~\eqref{probl:LP} with $I$ set to $\bI$ and
sample $\bx$ from the resulting distribution.
We note that while the above linear program can always be formulated for any set of assignments $\sigma_i$,
it is not guaranteed to be \redit{feasible. For} instance, in the extreme case where
we consider all solutions in $\{-1, 1\}^n$, the program is always infeasible. 

In the next part of the section, we show that the problem is always feasible provided the number of solutions $\LL$
is \redit{at most $2^t - 1$}. The proof explicitly constructs a solution to the above linear problem by reducing it to the problem of designing a binary linear code. The remainder of this section is organized as follows. We first give an abstract framework of how\redit{,} based on a $[k, t, r]$ binary linear code\redit{,} one can construct a distribution $\cQ$ over $X_{k}$ that satisfies $2^t - 1$ arbitrary predetermined assignments. The distribution $\cQ$ will form a solution to the linear program~\eqref{probl:LP} assuming $\LL = 2^t - 1$. 
Finally, at the end of this section, we discuss Gilbert's construction of a binary linear code and how it influences bounds attainable by the construction.

\ignore{
\begin{definition}\label{def:r-wise-ind-signs}
We say a distribution $Q_{I}$ over sign patterns of $k$ variables
is \emph{$(r-1)$-wise uniform} if the following holds for any sequence $S$ of elements in $[k]$ of size $(r - 1)$
\begin{align}
\forall_{x' \in \cbr{-1, 1}^{r-1}}  
\Pru{\bx \sim Q_{I}}{\bx[S] = x'} = \frac{1}{2^{r - 1}}.
\label{eq:d-wise_uniform_p2}
\end{align}
\end{definition}
From the perspective of hardness, in the next lemma, we observe that if every marginal distribution $Q_{I}$ is $(r-1)$-wise uniform and $I$ is sampled uniformly that the distribution $\cQ$ is $(r-1)$-wise uniform according to Definition~\ref{def:d-wise_independent}. Using the results of Section~\ref{sec:ins_gen}, we can then deduce that such constructed $\cQ$ has \redit{a} high (dependent on $r-1$) statistical dimension.
}
\ignore{
\begin{lemma}
Assume that the distribution $\cQ$ is constructed as above and that every marginal distribution $Q_{I}$ is $(r - 1)$-wise uniform, then $\cQ$ is $(r - 1)$-wise uniform according to Definition~\ref{def:d-wise_independent}.
\end{lemma}
\begin{proof}
The condition $1.$ of the Definition~\ref{def:d-wise_independent} is satisfied as $I$ is drawn uniformly from the set of all possible \redit{choicees} of the variables of a clause.

To show that $2.$ also holds, let us fix a subset $S \subseteq [k]$ of size at most $d$, and let $C'$ be any element of $X_{d}$. Denote $\mathcal{I}$ the set of all ordered $k$ element subsets of $[n]$. By the law of total probability, we have that
\begin{align*}
\Pru{C \sim Q_{i}}{C[S] = C'} &= \sum_{I \in \mathcal{I} } \Pru{C \sim Q}{C[S] = C' \text{ } | \text{ } I(C) = I} \cdot \Pru{C \sim Q}{I(C) = I}
\\
&= \sum_{I \in \mathcal{I}} \Pru{C \sim Q}{C[S] = C' \text{ } | \text{ } I(C) = I} \cdot \frac{1}{|\mathcal{I}|},
\end{align*}
where the last inequality follows from the fact that $I$ is drawn uniformly. We observe that the term $\Pru{C \sim Q}{C[S] = C' \text{ } | \text{ } I(C) = I}$ is either $0$ if $I(C')[S] \neq I[S]$, or is equal to $\Pru{x \sim Q_{I}}{x[S] = x(C')} = \frac{1}{2^|S|}$, based on the assumption that $Q_{I}$ is $d$-wise uniform. The number of $I \in \mathcal{I}$ for which the latter event occurs is ${n - |S| \choose k - |S|} (k - |S|)!$ and this yields that
$$
\sum_{I \in \mathcal{I}} \Pru{C \sim \cQ}{C[S] = C' \text{ } | \text{ } I(C) = I} \cdot \frac{1}{|\mathcal{I}|} = \frac{{n - |S| \choose k - |S|} (k - |S|)!}{2^d|I|} = \frac{(n-|S|)!}{2^d n!} = \frac{1}{|X_{d}|},
$$
where the last equality follows from the observation that $|\mathcal{I}| = \frac{n!}{(n - k)!}$. Thus\redit{,} the lemma follows.
\end{proof}
}

\subsection{Linear code reduction}
\label{sec:linear-code-reduction}
In the light of the above reasoning, and specifically by the generality of the definition~\ref{def:d-wise_independent}, we can reduce the construction of the distribution $Q$ to constructing $(r-1)$-wise uniform distribution $\cQ_{I}$ assuming that $I$ is a given set of variables. In this part, we show how to solve the latter part by reducing it to the existence of a binary linear code with good properties. Specifically, let us fix a sequence of variables $I$. For a binary linear code with parameters $[k, t, r]$, our construction is such that $\cQ_{I}$ contains a planted set of $2^t - 1$ predetermined solutions $\mathcal{A} = \{\sigma_{1}, \ldots, \sigma_{2^t - 1}\}$ and is $(r - 1)$-wise uniform. We next formally state a lemma that describes the dependency between the existence of a linear code and properties of the distribution $\cQ_{I}$. The proof of this lemma is the central result of this section. A pseudocode incorporating both the reduction and the specifics of the construction conditioned on the sequence of variables $I$ is given in Figure~\ref{alg:clause-binary-codes}. 
To align better with notation used for linear codes, without loss of generality, in the section we treat the sign patterns $x$ in the clauses and the assignments $\sigma_1, \dots, \sigma_t$ as binary vectors over $\mathbb{F}_{2}$ field taking values in $\{0, 1\}^k$ and $\{0, 1\}^n$ respectively. 

\begin{lemma}\label{lem:marginal-to-codes}
    Let $I \subseteq [n]$ of size $k$ be a set of variables
    and let $\cA \subseteq \{0, 1\}^n$ be a set of assignments where $\abs{\cA} \le 2^{t} - 1$.
    Assume that there is a $[k, t, r]$ linear code. There exists a distribution $\cQ_I$ over $\{0, 1\}^{k}$ such that\\ 
    \vspace{1mm}\noindent\textbf{(i)} for any $x$ such that $\cQ_I(x) > 0$
    the clause $C = (I, x)$ is satisfiable under $\sigma_1, \dots, \sigma_{|\cA|}$, where $\{\sigma_1, \dots, \sigma_{|\cA|}\} = \cA$
    and\\
    \vspace{1mm}\noindent\textbf{(ii)} for any sequence $S$ of size $r - 1$ of elements of $[k]$, the marginal distribution of $x[S]$ for $x \sim \cQ_I$ is uniform.
\end{lemma}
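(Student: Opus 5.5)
The plan is to take $\cQ_I$ to be the uniform distribution on an affine coset of the dual of the given code. Condition (ii) will follow from the minimum distance of the code, and condition (i) will follow from choosing the coset by counting.

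\textbf{Setup.} Let $V\in\{0,1\}^{t\times k}$ be a generator matrix of the $[k,t,r]$ code $\code$. Its rows are linearly independent, so $V$ has rank $t$ over $\mathbb{F}_2$. For $y\in\{0,1\}^t$, let $W_y:=\{x\in\{0,1\}^k : Vx=y\}$. Since $V$ has full row rank, every $W_y$ is nonempty: it is a coset of $\ker V$, and $\ker V=\code^\perp$. We will set $\cQ_I$ to be the uniform distribution on $W_{y^*}$ for a suitable $y^*$ that depends on $I$ and $\cA$.

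\textbf{Step 1: uniformity of marginals, property (ii).} Fix any $y$. By the Fourier criterion \eqref{eq:fourier_zero}, it suffices to show that for every nonempty $S\subseteq[k]$ with $|S|<r$, the character $x\mapsto\chi_S(x)=\langle 1_S,x\rangle$ takes values $0$ and $1$ equally often on $W_y$.
\begin{itemize}
\item A linear functional $\langle u,\cdot\rangle$ is constant on a coset $x_0+\code^\perp$ exactly when $u\in(\code^\perp)^\perp=\code$. Otherwise it is balanced on the coset, since it restricts to a nonzero linear functional on $\code^\perp$.
\item $\code$ is linear, so its minimum distance equals its minimum nonzero weight, which is $r$. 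Since $0<|S|<r$, the vector $1_S$ is nonzero with weight below $r$, so $1_S\notin\code$. Hence $\chi_S$ is balanced on $W_y$.
\end{itemize}
Because every nonempty $S$ with $|S|\le r-1$ has a balanced character, the marginal $x[S]$ is uniform for every such $S$. This covers every sequence of $r-1$ distinct coordinates. I read the sequence $S$ in the statement as having distinct elements, as in Definition~\ref{def:d-wise_independent}.

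\textbf{Step 2: satisfiability, property (i).} Under the convention that $(I,x)$ is satisfied by $\sigma$ iff $x[i]=\sigma[I[i]]$ for some $i$, the clause fails $\sigma$ only for the single sign pattern $a_\sigma:=\sigma[I]\oplus(1,\dots,1)$. So it suffices to choose $y^*$ with $a_\sigma\notin W_{y^*}$ for all $\sigma\in\cA$, that is, $y^*\neq Va_\sigma$ for all $\sigma\in\cA$. The set $\{Va_\sigma:\sigma\in\cA\}$ has at most $|\cA|\le 2^t-1$ elements, while there are $2^t$ choices of $y$. By pigeonhole some $y^*$ avoids all of them. Then every $x$ in the support $W_{y^*}$ satisfies every $\sigma\in\cA$, and Step 1 applied with $y=y^*$ gives (ii).

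\textbf{Main obstacle and link to the paper's description.} The only substantive point is Step 1, namely identifying $\ker V$ with $\code^\perp$ and using that the characters which are constant on its cosets are exactly the codewords of $\code$. Everything else is counting.

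This construction also matches the row-by-row halving picture of Figure~\ref{fig:linear-code}. The coordinates $y_1,\dots,y_t$ can be fixed sequentially, each $y_i$ chosen to disagree with $(Va_\sigma)_i$ for at least half of the $\sigma$'s not yet excluded. After $t$ rows, at most $\lfloor(2^t-1)/2^t\rfloor=0$ assignments remain unexcluded. This gives an explicit choice of $y^*$ in place of the pigeonhole argument.
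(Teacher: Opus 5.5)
Your proof is correct. You use the same distribution as the paper, namely the uniform distribution on the solution set $\{x: Vx=y\}$, but you justify both properties by different arguments.

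For (i), the paper builds $y$ by the greedy halving of Algorithm~\ref{alg:clause-binary-codes}:
at row $i$ it splits the still-uncovered assignments by the parity of $\sigma[I]$ on the support of $v_i$, and sets $y[i]$ so that the larger class is covered. Lemma~\ref{lem:nr_sols} then checks that $v_{i_a}^T x=y[i_a]$ forces an agreement with $a$. You instead note that each $\sigma$ excludes exactly one coset, namely $y=Va_\sigma$, so pigeonhole over the $2^t$ cosets suffices. This buys several things:
\begin{itemize}
\item It makes the threshold $2^t-1$ transparent.
\item It shows the paper's rule is just one way of forcing $y\neq Va_\sigma$ (it makes them differ in coordinate $i_\sigma$).
\item It gives at least $2^t-|\cA|$ admissible choices of $y$.
\item It makes the paper's ``exactly $t$ iterations, otherwise pass to a subcode'' remark unnecessary.
\end{itemize}
The greedy version, which you recover in your last paragraph, is the one that matches Algorithm~\ref{alg:clause-binary-codes} and Figure~\ref{fig:linear-code}.

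For (ii), the paper appends $e_{\ell_1},\dots,e_{\ell_{r-1}}$ to $V$ and proves linear independence (Lemma~\ref{cla:lin-ind}). It then counts $2^{k-t-(r-1)}$ solutions of the augmented system via Rouch\'e--Capelli, independently of the fixed values $z$. You instead identify $\ker V=\code^\perp$ and show that every character $\chi_S$ with $0<|S|<r$ is balanced on each coset, because $1_S\notin\code$. This verifies the constraints \eqref{eq:fourier_zero} directly; it is the classical fact that the dual of a distance-$r$ code is an orthogonal array of strength $r-1$.

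Both arguments rest on the same fact, that no nonzero codeword has weight below $r$. The paper's version yields an exact solution count, while yours connects directly to the LP formulation \eqref{probl:LP}. Your reading of $S$ as a sequence of distinct coordinates is also what the paper's proof assumes implicitly, since it needs the $e_{\ell_j}$ to be independent.
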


We first give a sketch of the proof.
Let $V \in \{0, 1\}^{t \times k}$ denote the generator matrix of the binary linear code
and let $v_1, \dots, v_t$ denote the transposed rows of $V$, i.e. $v_{i} \in \mathbb{F}_{2}^{k}$. By definition of a $[k, t, r]$ linear code, any two vectors $u \ne v$ in the subspace of $v_1, \dots, v_t$ have Hamming distance $\ge r$. 

We set the distribution $\cQ_I$ to be uniform over the solutions of the linear system $Vx=b$ over $\mathbb{F}_2$, where the value of $b$ will be specified later. 
Each row in the linear system will correspond to a subset of the assignments, 
\redit{the} corresponding linear equation ensures feasibility for all of these assignments.
This effectively proves condition \textbf{(a)}. 
To prove condition \textbf{(b)}, we will rely on the properties of the linear code, namely the large Hamming distance between any linear combination of the rows, to prove that the solutions of the system of linear equations are uniform when focusing on any subset of size $<r$.

\begin{algorithm}[ht]
  \caption{Generating a clause of size $k$ that satisfies a set of given assignments}
  \label{alg:clause-binary-codes}
  \KwIn{Assignments $\mathcal{A}$, number of variables $n$, $t \times k$ matrix of a $[t, k, r]$ linear code}
  \KwOut{A random clause with $k$ variables that is satisfied by all assignments}

  \SetKwFunction{FMain}{GenerateClause}
  \SetKwProg{Fn}{Function}{:}{}
  \Fn{\FMain{$\mathcal{A}$, $n$, $V_{t \times k}$}}{
    Let $\I$ be a uniformly selected set of variables of size $k$\;
    
    Let $y = [0]^{k}$ be a vector to be filled in the construction\;
    
    $R_0 \gets \emptyset$\;
    
    \For{$i = 1$ \textbf{to} $t$}{
      Let $R_i^{(0)}$ and $R_i^{(1)}$ be empty sets of assignments\;
      
      \For{$\as \in \mathcal{A} \setminus R_{i-1}$}{
        \uIf{$\bigoplus_{j = 1}^{|I[v_i]| + 1} 1 \oplus \bigoplus_{j \in I[v_i]} a[j] = 0$}{
          Add $\as$ to $R_i^{(0)}$\;
        }
        \Else{
          Add $\as$ to $R_i^{(1)}$\;
        }
      }
      \uIf{$|R_i^{(0)}| > |R_i^{(1)}|$}{
        $R_i \gets R_{i-1} \cup R_i^{(0)}$\;
        
        $y[i] \gets 0$\;
      }
      \Else{
        $R_i \gets R_{i-1} \cup R_i^{(1)}$\;
        
        $y[i] \gets 1$\;
      }
    }
    Let $x$ be a uniformly random solution for the equation $A x = y$\;
    
    \Return{$C = (I, x)$ as a random clause}\;
  }
\end{algorithm}

We now proceed to a formal proof of Lemma~\ref{lem:marginal-to-codes}. (Described in Algorithm \ref{alg:clause-binary-codes}) For a binary vector $v$ of the same number of elements as the variables set $I$, we define $I[v]$ as the variable subset of $I$ consisting of elements on these positions $j$ that have $v[j] = 1$.
Define the following, inductive, partitioning of $\mathcal{A}$. Let $R_{0} = \emptyset$. For $i \ge 1$ and as long as $R_{i - 1} \neq \mathcal{A}$, we repeat the following operations. Partition the set of the assignments $\mathcal{A} \setminus R_{i - 1}$ into two subsets, those with the even number of positive literals on positions in $I[v_i]$ and those with the odd number of positive literals (Recall that for an assignment $a \in \{0, 1\}^{n}$, a literal on the position $i \in [n]$ is positive if and only if $a(i) = 1$). 
Let $\mathcal{A}_{i}$ be the subset with the larger number of assignments.
Define $R_{i} = R_{i - 1} \cup \mathcal{A}_{i}$ and let 
\begin{align*}
    y[i] := \bigoplus_{j = 1}^{|I[v_i]| + 1} 1 \oplus \bigoplus_{j \in I[v_i]} a[j] 
\end{align*}
for an arbitrary $a \in \mathcal{A}_i$.
Since all assignments have the same parity of positive variables on positions $I[v_i]$,  this definition does not depend on the choice of $a$\redit{,} and it proves its validity. Also, note that this construction halves the size of the set $R_{i}$ in each iteration. Since $\abs{\mathcal{A}} \le 2^t - 1$, thus after at most $t$ iterations the construction ends. Also, in the following, without loss of generality, we assume that the construction ends exactly after $t$ iterations; otherwise, we can use a subcode (i.e.\redit{,} a linear subspace) of the original code of the dimension equal to the desired number of iterations.
Thus, we get that the construction satisfies the following.
\begin{lemma}
Let $v_{1}, \ldots, v_{t}$ be a base of a $[k, t, r]$ binary linear code. The sequence of sets $\cA_{1}, \ldots, \cA_{t}$ and the sequence of numbers $y[i]$ give the following guarantees: \\
\noindent \textit{$(i)$} the family $\{\cA_{1}, \ldots, \cA_{t}\}$ is a partitioning of the set $\mathcal{A}$, \\
\noindent \textit{$(ii)$} for any $a \in \mathcal{A}$, let $i_{a}$ be the unique index such that $a \in \cA_{i_{a}}$. Then, it holds that
$$y[i_{a}] = \bigoplus_{j = 1}^{|I[v_{i_{a}}]| + 1} 1 \oplus \bigoplus_{j \in I[v_{i_{a}}]} a[j].$$
\end{lemma}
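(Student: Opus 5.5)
The plan is to read both guarantees straight off the inductive construction. Part (i) is a statement about sets, and part (ii) follows from how $y[i]$ is chosen. Throughout, I will use the convention fixed just above the lemma: the construction runs for exactly $t$ rounds, passing to a subcode if it would stop earlier.

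For \textit{(i)}, I will argue disjointness and coverage separately. Disjointness holds because at step $i$ the set $\cA_i$ is selected inside $\cA\setminus R_{i-1}$. Since $R_{i-1}=\cA_1\cup\dots\cup\cA_{i-1}$, by induction $\cA_i$ misses every earlier $\cA_j$.

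Coverage is the one step that needs counting. Put $m_i := |\cA\setminus R_i|$, so $m_0=|\cA|\le 2^t-1$. At step $i$ the remaining set splits into an even-parity part and an odd-parity part, and $\cA_i$ is the larger of the two, with ties resolved as in Algorithm~\ref{alg:clause-binary-codes}. Hence $|\cA_i|\ge \lceil m_{i-1}/2\rceil$, which gives $m_i\le \lfloor m_{i-1}/2\rfloor$. Unrolling this, $m_t\le \lfloor (2^t-1)/2^t\rfloor = 0$, so $R_t=\cA$.

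One degenerate case needs care: if some $m_{i-1}=0$, then $\cA_i$ is empty. Such empty parts can be dropped, or kept as empty blocks, depending on which reading of ``partitioning'' is intended. This is the only subtle point. A clean way to handle it is to state the invariant $m_i\le 2^{t-i}-1$ and prove it by induction, using $\lfloor (2^{t-i+1}-1)/2\rfloor = 2^{t-i}-1$.

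For \textit{(ii)}, take $a\in\cA$ and let $i_a$ be its unique block, which exists by (i). Every element of $\cA_{i_a}$ lies in the same parity class of $\bigoplus_{j\in I[v_{i_a}]} a[j]$, because the split at step $i_a$ was made by exactly that parity. Therefore the value $\bigoplus_{j=1}^{|I[v_{i_a}]|+1}1\oplus\bigoplus_{j\in I[v_{i_a}]}a[j]$ is constant on $\cA_{i_a}$. The quantity $y[i_a]$ was defined as this value at an arbitrary element of the block, so it equals the value at $a$. This is exactly the claimed identity.

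I do not expect any real obstacle. The only care needed is the halving bookkeeping with the floor, including the tie-breaking and empty-block conventions, and noting that well-definedness of $y[i]$ is precisely what part (ii) asserts.
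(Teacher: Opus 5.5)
Your proposal is correct and follows essentially the paper's own argument. Both prove disjointness by construction, coverage by halving the remaining set $\cA\setminus R_i$ under the bound $|\cA|\le 2^t-1$, and part (ii) as the well-definedness of $y[i]$, since all elements of $\cA_i$ share the same parity on $I[v_i]$. Your bookkeeping via $m_i=|\cA\setminus R_i|$ and the invariant $m_i\le 2^{t-i}-1$ is more careful than the paper's wording, which loosely says the construction halves $R_i$ itself.
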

Let $y \in \mathbb{F}^{t}_{2}$ denote a vector having entries $y[i]$. 
Consider the linear system over $\mathbb{F}_{2}$ field 
\begin{align*}
v_{1}^{T} & x = y[1] \\
v_{2}^{T} & x = y[2] \\
& \vdots\\
v_{t}^{T} & x = y[t],
\end{align*}
which can be written $Vx = y$ in the matrix form. 
Finally, let us define the distribution $Q_{I}$ as the uniform distribution over the space of the solutions of the linear system $V \cdot x = y$.

\begin{lemma}\label{lem:nr_sols}
A clause $(\bI, \bx)$ where $x$ is the sign pattern drawn from $Q_{I}$ is satisfied by all assignments from the set $\mathcal{A}$ with probability $1$.
\end{lemma}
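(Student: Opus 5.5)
Fix $a\in\mathcal{A}$ and an $x$ in the support of $Q_I$, i.e.\ $Vx=y$ over $\mathbb{F}_2$. We must show that the clause $(I,x)$ is satisfied by $a$, i.e.\ that there is a coordinate $j\in[k]$ with $x[j]=a[I[j]]$. Recall from Section~\ref{sec:cons} (condition \textbf{(a)}) that $a$ fails the clause exactly when the sign pattern is the unique ``complement'' pattern, so in the $\{0,1\}$ encoding the only bad pattern is $x = a[I]\oplus \mathbf{1}$. Hence it suffices to show that $x\neq a[I]\oplus\mathbf{1}$. We do this by exhibiting a single linear equation of the system violated by this pattern.

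By the preceding lemma, property \textit{(i)}, there is a unique index $i:=i_a$ with $a\in\cA_i$. Property \textit{(ii)} then gives
\[
y[i]=\bigoplus_{j=1}^{|I[v_i]|+1}1\;\oplus\bigoplus_{j\in I[v_i]}a[j],
\]
i.e.\ $y[i]$ equals $(|I[v_i]|+1) \bmod 2$ plus the parity of $a$ on the positions selected by $v_i$. Now evaluate the $i$-th row on the bad pattern $z:=a[I]\oplus\mathbf{1}$:
\[
v_i^Tz=\bigoplus_{j:\,v_i[j]=1}\bigl(a[I[j]]\oplus 1\bigr)=|I[v_i]| \bmod 2\;\oplus\bigoplus_{j\in I[v_i]}a[j]=y[i]\oplus 1 .
\]
Thus $v_i^Tz\neq y[i]$, so $z$ is not a solution of $Vx=y$ and has probability $0$ under $Q_I$. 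Every $x$ in the support therefore differs from $z$ in some coordinate $j$, which means $x[j]=a[I[j]]$, and $a$ satisfies the clause. Since $a$ was arbitrary, all of $\mathcal{A}$ is satisfied with probability $1$.

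The only delicate step is bookkeeping. First, check that the satisfaction convention ($x[j]=\sigma[I[j]]$ from the SAT notation) matches the parity formula defining $y[i]$, so that the extra ``$+1$'' term produces exactly the flip shown above. Second, note that $I[v_i]$ indexes positions in $I$ via $a[I[j]]$, and that $|I[v_i]|$ equals the Hamming weight of $v_i$. We also need the system $Vx=y$ to be consistent, so that $Q_I$ is well defined. This holds because $V$, being a generator matrix, has full row rank $t$. If the construction stopped before $t$ rounds, we restrict to the corresponding subcode, as stated in the text.
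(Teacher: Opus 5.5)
Your proof is correct and is essentially the paper's argument. Both use the equation in row $i_a$, whose extra ``$+1$'' parity term is what makes it work. You show that the all-disagreeing pattern $a[I]\oplus\mathbf{1}$ violates $v_{i_a}^T x = y[i_a]$. The paper instead shows directly that every solution $x^*$ must agree with $a$ at some position in the support of $v_{i_a}$, which is the contrapositive of the same parity computation.
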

\begin{proof}
Let $x^* \in \mathbb{F}_{2}^{k}$ be any solution to the system $Vx = y$. Let $a \in \mathcal{A}$ be one of the assignments and denote $i^*$ be the index so that $a \in \mathcal{A}_{i^*}$. In particular, we get that $x^{*}$ satisfies
$$v_{i^*}^{T} x^* = y[i^{*}],$$
which means that
$$\bigoplus_{j \in I[v_{i^*} ]} x^{*}[j] =  \bigoplus_{1 = j}^{|I[v_{i^*}]| + 1} 1 \oplus \bigoplus_{j \in I[v_{i^*}]} a[j],$$
and as a consequence
\begin{equation}\label{line:1}
\bigoplus_{1 = j}^{|I[v_{i^*}]| + 1} 1 = \bigoplus_{j \in I[v_{i^*}]} \left( a[j] \oplus x^{*}[j] \right).
\end{equation}
This means that there must be at least one index $j \in I[v_{i^*}]$ such that the positivity/negativity of the variable on the position $j$ of $x^{*}$ matches the literal $a[j]$. Otherwise, 
$$\bigoplus_{j \in I[v_{i^*}]} \left( a(j) \oplus x^{*}(j) \right) = \bigoplus_{j \in I[v_{i^*} ]} 1 \neq  \bigoplus_{1 \le j}^{|I[v_{i^*}]| + 1} 1,$$
which contradicts \ref{line:1}. This proves that $a$ is satisfied by any clause belonging to the support of $Q_{I}$.
\end{proof}

The next lemma guarantees that the generated distribution is $(r-1)$-wise uniform.

\begin{lemma}\label{lem:Q_I_uniform}
The distribution $Q_{I}$ over sign patterns is $(r-1)$-wise uniform. 
\end{lemma}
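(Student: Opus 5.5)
We prove that every set of $r-1$ coordinates of $\bx \sim Q_I$ is uniform by showing that all low-degree Fourier coefficients of $Q_I$ vanish. Working over $\mathbb{F}_2$, the marginal of $\bx[S]$ is uniform for a fixed $S \subseteq [k]$ with $|S| \le r-1$ if and only if, for every nonzero $w \in \mathbb{F}_2^k$ supported in $S$, the bit $w^T \bx$ is unbiased. This is the characterization in \eqref{eq:fourier_zero}: a distribution on $\mathbb{F}_2^{|S|}$ is uniform exactly when all its nontrivial characters have zero mean. So it suffices to show that $\Pr[w^T\bx = 0] = 1/2$ for every nonzero $w$ of Hamming weight at most $r-1$.

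First we describe the support of $Q_I$. The system $Vx = y$ is consistent because $V$ has full row rank $t$: the rows form a basis of the code. Its solution set is therefore an affine subspace $x_0 + K$, where $K = \ker V$ has dimension $k-t$, and $Q_I$ is uniform on it. Fix a nonzero $w$ and write $\bx = x_0 + \mathbf{z}$ with $\mathbf{z}$ uniform on $K$. Then $w^T\bx = w^T x_0 + w^T \mathbf{z}$. The map $z \mapsto w^T z$ is a linear functional on $K$, so it is either identically zero on $K$ or takes each value on exactly half of $K$. In the second case $w^T\bx$ is an unbiased bit and we are done.

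The key step is to rule out the first case, where $w^T z = 0$ for all $z \in K$. That means $w \in K^\perp = (\ker V)^\perp = \mathrm{rowspace}(V)$, which is the code itself. Since $w \ne 0$ is then a codeword, and a linear code's minimum distance equals its minimum nonzero weight (the distance between $w$ and $0$), we get $|w| \ge r$. This contradicts $|w| \le r-1$, so the degenerate case never occurs.

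The main point needing care is the identity $(\ker V)^\perp = \mathrm{rowspace}(V)$ over $\mathbb{F}_2$. I would justify it by a dimension count: $\dim \ker V = k-t$, so $\dim (\ker V)^\perp = t$, and the row space, which has dimension $t$, is contained in $(\ker V)^\perp$. The argument also uses the linear-code fact that minimum distance equals minimum nonzero weight. If the construction used a subcode with fewer rows, as the text allows, the same argument applies verbatim, because a subcode has minimum distance at least $r$. Combining these steps, every character $\chi_S$ with $0<|S|<r$ has zero mean under $Q_I$, which proves the lemma.
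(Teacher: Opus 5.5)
Your proof is correct, but it works on the dual side of the paper's argument.

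The paper fixes $r-1$ positions $\ell_1,\dots,\ell_{r-1}$ and a pattern $z$, and appends the rows $e_{\ell_1}^T,\dots,e_{\ell_{r-1}}^T$ to $V$. It then shows that $v_1,\dots,v_t,e_{\ell_1},\dots,e_{\ell_{r-1}}$ are linearly independent: a nonzero code combination has weight at least $r$, while a combination of the $e_{\ell_j}$ has weight at most $r-1$. Via Rouch\'e--Capelli, every $z$ then has exactly $2^{k-t-(r-1)}$ completions, hence probability $2^{-(r-1)}$.

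You instead check the character condition \eqref{eq:fourier_zero} directly. You write the support as $x_0+\ker V$ and use $(\ker V)^\perp=\mathrm{rowspace}(V)$ to rule out a low-weight $w$ vanishing on $\ker V$.

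The key fact is the same in both: no nonzero codeword is supported on fewer than $r$ coordinates. A dependence among the paper's augmented rows is exactly such a codeword supported on $\{\ell_1,\dots,\ell_{r-1}\}$.

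What each route buys:
\begin{itemize}
\item The paper's counting avoids both the Fourier characterization of uniform marginals and the duality identity, and it produces explicit solution counts. Your dimension count for the duality identity is fine, since the dot product on $\mathbb{F}_2^k$ is nondegenerate even though $\ker V\cap(\ker V)^\perp$ may be nonzero.
\item Your version matches how the requirement is phrased in the LP and treats all $|S|\le r-1$ at once. It also gives a sharp dichotomy: $w^T\bx$ is either unbiased or constant, and it is constant exactly when $w$ lies in the code spanned by the rows of $V$. That dichotomy yields the paper's informal converse for free, so the complexity of $Q_I$ equals the minimum distance of that code.
\end{itemize}
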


We first provide an intuition for why linear codes show up.
Assume that the rows of $V$ do not satisfy the Hamming distance condition, i.e., there exists some $u\ne 0$ in the subspace of $v_1, \dots, v_t$ with Hamming distance $< r$ from the origin.
We will provide a simple proof that the distribution is not $(r-1)$-wise uniform. 

This means that there exists $\alpha\in \{0, 1\}^n$ such that
$u = \sum \alpha_i v_i$ .
Summing the rows of the linear system $Vx = y$ with coefficients $\alpha[i]$, we conclude that
for $x$ that is a solution to the system satisfies $u^{T} \cdot x = \sum \alpha[i] y[i]$. This means that the coordinates in $x$ corresponding to $u[i]$ always have a fixed sum\redit{,} and therefore are not distributed uniformly. Since $u$ has at most $(r-1)$ coordinates equal to 1, this means the distribution is not $(r-1)$-wise uniform.

The following lemma shows the converse of the above sketch, i.e., it shows that when the Hamming property does hold, the distribution is $(r-1)$-wise uniform. 

\begin{proof}[Proof of Lemma~\ref{lem:Q_I_uniform}]
To prove the theorem\redit{,} we calculate the number of solutions of the linear system $Vx = y$ for a vector $x$ that has any $r-1$ coordinates fixed. Denote the set of the $r-1$ positions $\ell_{1}, \ldots, \ell_{r-1} \in [k]$ and consider a fixed pattern of values $z_{1}, \ldots, z_{r-1} \in \{0, 1\}$ that must be preserved on these positions. To calculate the number of solutions we consider the following system of equations.
\begin{align*}
v_{1}^{T} & x = y[1] \\
v_{2}^{T} & x = y[2] \\
& \vdots\\
v_{t}^{T} &  x = y[t] \\
e_{\ell_{1}}^{T} & x = z[1] \\
& {\vdots} \\
e_{\ell_{r-1}}^{T} &  x = z[r-1],
\end{align*}
where $e_{\ell_{1}}, \ldots, e_{\ell_{r-1}}$ are the vectors from the standard basis. We denote the matrix form of the system as $V_{e} x = y_{e}$. 
We first observe the set of vectors $\{v_{1}, \ldots, v_{t}, e_{\ell_{1}}, \ldots, e_{\ell_{r-1}} \}$ is linearly independent.
\begin{lemma}\label{cla:lin-ind}
Vectors $v_{1}, \ldots, v_{t}, e_{\ell_{1}}, \ldots, e_{\ell_{r-1}}$ are linearly independent.
\end{lemma}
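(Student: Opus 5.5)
The plan is a direct proof by contradiction, using only the defining distance property of the code: every nonzero vector in the span of $v_1,\dots,v_t$ has Hamming weight at least $r$. For a linear code, the minimum distance equals the minimum weight of a nonzero codeword, since $\Delta(u,w)$ is the weight of $u\oplus w$. Throughout I assume $\ell_1,\dots,\ell_{r-1}$ are distinct, as they are positions of a sequence $S$ of distinct coordinates.

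Suppose a nontrivial linear dependence exists over $\mathbb{F}_2$:
\[
\sum_{i=1}^{t} \alpha_i v_i \oplus \sum_{j=1}^{r-1} \beta_j e_{\ell_j} = 0,
\]
with $(\alpha,\beta)\neq 0$. Set $u := \sum_i \alpha_i v_i$. The relation then reads $u = \sum_j \beta_j e_{\ell_j}$.

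There are two cases.

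\textbf{Case $\alpha = 0$.} Then $\sum_j \beta_j e_{\ell_j} = 0$ with $\beta \neq 0$. This is impossible, because distinct standard basis vectors are linearly independent.

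\textbf{Case $\alpha \neq 0$.} Since $v_1,\dots,v_t$ are a basis of the code (the generator matrix has rank $t$), $u$ is a nonzero codeword, so its weight is at least $r$. On the other hand, $u = \sum_j \beta_j e_{\ell_j}$ is supported inside $\{\ell_1,\dots,\ell_{r-1}\}$, so its weight is at most $r-1$. This contradiction proves linear independence.

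The step needing the most care is the claim $u \neq 0$ when $\alpha \neq 0$. It relies on the rows of the generator matrix being linearly independent. This holds by the definition of a generator matrix for a dimension-$t$ code, and it is preserved by the subcode reduction mentioned earlier in the construction, since a basis of a subcode is still linearly independent.

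Once the lemma is established, the surrounding proof of Lemma~\ref{lem:Q_I_uniform} goes through. The $(t+r-1)\times k$ matrix $V_e$ has full row rank, so $V_e x = y_e$ is consistent with exactly $2^{k-t-r+1}$ solutions for every choice of $z$. Dividing by the $2^{k-t}$ solutions of $Vx=y$ shows that $x[S]$ is uniform.
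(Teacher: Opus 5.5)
Your proof is correct and is essentially the paper's argument. A nonzero vector in the span of the $v_i$ has weight at least $r$, while any combination of the distinct $e_{\ell_j}$ has weight at most $r-1$, so no nontrivial combination can vanish. Your case split on whether $\alpha=0$ simply makes explicit the paper's implicit assumption that the code component $u_1$ is nonzero.
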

\begin{proof}
Consider any subset $B$ of the set of vectors and partition it into $B_{1} \subseteq \{ v_{1}, \ldots, v_{t} \}$ and $B_{2} \subseteq  \{e_{\ell_{1}}, \ldots, e_{\ell_{r-1}} \}$. It holds that $\{ v_{1}, \ldots, v_{t} \}$ is a linearly independent set of vectors, as it is the basis (codewords) of the linear code. Vectors $\{e_{\ell_{1}}, \ldots, e_{\ell_{r-1}} \}$ are linearly independent because they are a subset of the standard basis. 
Consider now a mixed linear combination $u_{1} + u_{2}$, where $u_{1} \in span\left( v_{1}, \ldots, v_{t} \right)$ and $u_{2} \in span\left( e_{\ell_{1}}, \ldots, e_{\ell_{r-1}} \right)$.
Since $u_{1}$ is a linear combination of codewords of the subspace $C$, it has at least $d$ distance from the codeword $(0, \ldots, 0)$, thus\redit{,} it has at least $d$ non-zero entries.
On the other hand, $dim\left(span\left( e_{\ell_{1}}, \ldots, e_{\ell_{r-1}} \right)\right) = r - 1$, which means that $u_{2}$ has at most $r-1$ non-zero entries.
Thus, it must be that $u_{1} + u_{2} \neq 0$ and the linear independence of the set of vectors $\{v_{1}, \ldots, v_{t}, e_{\ell_{1}}, \ldots, e_{\ell_{r-1}} \}$ follows.
\end{proof}

Using Lemma~\ref{cla:lin-ind}, we obtain that the matrix $V_{e}$ has full rank. Therefore, it must also be that the augmented matrix  $[V_{e}|y_{e}]$, derived from the linear system $V_{e}x = y_{e}$, is also a full rank matrix. We can apply then the Rouché–Capelli theorem to argue that the linear system $V_{e}x = y_{e}$ has exactly $|\mathbb{F}_{2}|^{k - t - (r- 1)} = 2^{k - t - (r-1)}$ solutions. Observe that the number is independent of the choice of the fixed positions and the pattern of $0$'s and $1$'s on these positions. In consequence, if $Q_{I}$ is the uniform distribution over the set of all solutions of $Vx = y$, it follows that
$$\Pru{x \sim Q_{I}}{x[L] = z} = \frac{2^{k - t - (r-1)}}{2^{k - t}} = \frac{1}{2^{r-1}},$$
which proves $(r-1)$-uniformness of $Q_{I}$.
\end{proof}

\subsection{Binary linear code construction}
The above construction relies on the relationship between the parameters $t$ and $r$ in the choice of the $[k, t, r]$ binary code. Using Gilbert-Varshamov's bound \cite{gilbert1952comparison, varshamov1957estimate}\redit{,} we can realize the trade-off such that the number of planted assignments is exponential in the distance of the binary code. 
Formally, Gilbert-Varshamov's theorem, based on a random construction, guarantees the following.

\begin{theorem}[Gilbert-Varshamov's theorem]
\label{thm:gv-lc}
Let $H(x)$ be the entropy function defined as $H(x) = - x \log_2(x) - (1-x) \log_2(1-x))$.
For any $k$, $0 \le \delta_0 \le \frac{1}{2}$, there exists a $[k, t, r]$ binary linear code $\code$ whose parameters $k, t, r$ satisfy $\delta(\code) = \delta_0$ and $R(\code) \ge 1 - H(\delta_0)$, where in this case $\delta(\code) = \frac{r}{k}$, $R(\code) = \frac{t}{k}$.
.
\end{theorem}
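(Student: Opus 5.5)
The plan is the standard random-generator-matrix argument, with a union bound over nonzero messages. Set $r := \lceil \delta_0 k \rceil$ and $t := \lfloor k(1 - H(\delta_0)) \rfloor$. We must show that some $V \in \{0,1\}^{t \times k}$ has rank $t$ and every nonzero codeword $mV$ (with $m \in \mathbb{F}_2^t \setminus \{0\}$) has Hamming weight at least $r$. For a linear code the minimum distance equals the minimum weight of a nonzero codeword, since $\Delta(u,w)$ is the weight of $u \oplus w$. So this gives a $[k,t,r]$ code as used in Lemma~\ref{lem:marginal-to-codes}. Full rank need not be checked separately: if $mV = 0$ for some $m \neq 0$, that codeword has weight $0 < r$, which the weight condition already excludes.

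\textbf{Step 1 (random construction).} Draw $V$ uniformly from $\{0,1\}^{t\times k}$. For each fixed nonzero $m$, the vector $mV$ is uniform on $\mathbb{F}_2^k$: pick a coordinate $i$ with $m_i = 1$; conditioned on all other rows, $mV$ is row $i$ shifted by a fixed vector, and row $i$ is uniform. Hence
\[
\Pr\bigl[\mathrm{wt}(mV) \le r-1\bigr] = \frac{\mathrm{Vol}(k, r-1)}{2^k},
\]
where $\mathrm{Vol}(k,\rho) = \sum_{j \le \rho}\binom{k}{j}$ is the volume of a Hamming ball of radius $\rho$.

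\textbf{Step 2 (volume bound).} Use the entropy bound $\mathrm{Vol}(k, \delta k) \le 2^{H(\delta)k}$ for $\delta \le 1/2$. It follows from
\[
1 \ge \sum_{j \le \delta k}\binom{k}{j}\delta^j(1-\delta)^{k-j} \ge \mathrm{Vol}(k,\delta k)\,\delta^{\delta k}(1-\delta)^{(1-\delta)k},
\]
where the second inequality holds because $\delta/(1-\delta) \le 1$, so each term is at least the one at $j = \delta k$. Since $r - 1 < \delta_0 k$, we get $\mathrm{Vol}(k, r-1) \le 2^{H(\delta_0)k}$.

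\textbf{Step 3 (union bound).} Summing over the $2^t - 1$ nonzero messages, the probability that some nonzero codeword has weight below $r$ is at most
\[
(2^t - 1)\,2^{(H(\delta_0) - 1)k} < 2^{t - (1-H(\delta_0))k} \le 1 .
\]
So some $V$ has the required properties, and this $V$ is the generator matrix.

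The obstacle is bookkeeping rather than substance, namely the exact equalities $\delta(\code) = \delta_0$ and $R(\code) \ge 1 - H(\delta_0)$ in the statement. The construction certifies minimum distance \emph{at least} $r \approx \delta_0 k$. That is all the reduction needs, since a code of larger distance is in particular $(r-1)$-wise uniform in Lemma~\ref{lem:Q_I_uniform}, so I would read $\delta(\code) = \delta_0$ as $\ge$ (up to rounding of $\delta_0 k$). Likewise the rate is $\lfloor k(1-H(\delta_0)) \rfloor / k$, which equals $1 - H(\delta_0)$ only up to the floor. The degenerate case $\delta_0 = 1/2$ gives $t = 0$ and is trivial.
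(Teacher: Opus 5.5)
The paper does not prove this statement; it cites Gilbert and Varshamov and only remarks that the result comes from "a random construction." Your argument (a uniformly random generator matrix, the entropy bound on the Hamming-ball volume, and a union bound over the $2^t-1$ nonzero messages) is the standard proof of that result, and it is correct.

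Your rounding caveat is warranted, and it is the right reading of the statement. What you actually establish is a $[k,\lfloor k(1-H(\delta_0))\rfloor,\ge\lceil\delta_0 k\rceil]$ code. The literal equality $\delta(\code)=\delta_0$ is impossible whenever $\delta_0 k\notin\mathbb{Z}$. Your rate is $\lfloor k(1-H(\delta_0))\rfloor/k$, so it matches the claimed $t/k\ge 1-H(\delta_0)$ only up to rounding. Distance at least $r$ is all that Lemma~\ref{lem:Q_I_uniform} needs.
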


Using Lemma~\ref{lem:marginal-to-codes} together with the existence of this binary linear code gives the following result.

\begin{theorem}\label{thm:existence_cQ}
Fix $k$, for a given $\delta_0$, $0 \le \delta_0 \le \frac{1}{2}$, there exists a $(\delta_0 k - 1)$-wise uniform distribution $\cQ$ over clauses of size $k$ such that it satisfies a given set of assignments $\mathcal{A}$ of size at most $2^{k(1 - H(\delta_0))}-1$.
\end{theorem}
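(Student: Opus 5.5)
The plan is to combine Theorem~\ref{thm:gv-lc} with Lemma~\ref{lem:marginal-to-codes}, and then to lift the per-$I$ construction to a distribution over clauses exactly as in the recipe preceding LP1.

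\textbf{Step 1: choose the code.} Apply Theorem~\ref{thm:gv-lc} with the given $k$ and $\delta_0$. This yields a $[k,t,r]$ binary linear code with $r=\delta_0 k$ and $t = R(\code)k \ge k(1-H(\delta_0))$. Implicitly we assume $\delta_0 k$ is an integer, or we round $r$ down, which only weakens the uniformity claim consistently. It follows that
\[
|\cA| \le 2^{k(1-H(\delta_0))}-1 \le 2^t-1,
\]
so the hypothesis of Lemma~\ref{lem:marginal-to-codes} holds.

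\textbf{Step 2: build the per-$I$ distributions.} For every variable sequence $I\in\cI_k$, invoke Lemma~\ref{lem:marginal-to-codes} to obtain $\cQ_I$. Concretely, $\cQ_I$ is the uniform distribution on the solutions of $Vx=y_I$, where $y_I$ is produced by the halving procedure. Lemma~\ref{lem:nr_sols} then gives that every $x$ in the support makes $(I,x)$ satisfied by all of $\cA$. Lemma~\ref{lem:Q_I_uniform} gives that $\cQ_I$ is $(r-1)$-wise uniform.

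One point needs care. The lemma is stated for ordered sequences $S$ of size exactly $r-1$, whereas Definition~\ref{def:d-wise_independent} asks for all $|S|<r$. This follows by marginalizing: a uniform marginal on $r-1$ coordinates has uniform marginals on every subset. When $k<r-1$ the statement is vacuous or trivially adjusted.

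\textbf{Step 3: define $\cQ$.} Sample $\bI$ uniformly from $\cI_k$, then sample $\bx\sim\cQ_{\bI}$. By Definition~\ref{def:d-wise_independent}, $\cQ$ is $(r-1)=(\delta_0k-1)$-wise uniform, because the marginal of $\bI$ is uniform and each conditional $Q_I$ is $(r-1)$-wise uniform. Every clause in the support is satisfied by all assignments in $\cA$, since this already holds conditionally on each $I$.

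\textbf{Main obstacle.} The only subtle point is the bookkeeping in Step 1. We need the code to have dimension exactly covering $|\cA|$, and the halving procedure may terminate before $t$ rounds. As remarked before Lemma~\ref{lem:nr_sols}, this is handled by passing to a subcode. A subcode has minimum distance at least $r$, and Lemma~\ref{cla:lin-ind} only uses that distance bound, so uniformity is preserved. Alternatively, one can set the unused $y[i]$ arbitrarily; this does not affect satisfiability and leaves the linear-independence argument unchanged. Beyond this and the integrality of $\delta_0 k$, the proof is a direct composition of earlier results.
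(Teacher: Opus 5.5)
Your proposal is correct and follows essentially the same route as the paper: obtain a $[k,t,r]$ code from Theorem~\ref{thm:gv-lc} with $r=\delta_0 k$ and $t\ge k(1-H(\delta_0))$, check $|\cA|\le 2^t-1$, apply Lemma~\ref{lem:marginal-to-codes} to a uniformly sampled $\bI$, and conclude via Definition~\ref{def:d-wise_independent}. Your additional bookkeeping is sound and only makes explicit what the paper leaves implicit: marginalizing from $r-1$ coordinates to smaller sets, handling early termination of the halving procedure via a subcode or arbitrary unused $y[i]$, and the integrality of $\delta_0 k$.
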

\begin{proof}
Consider a distribution $\cQ$ defined by first randomly and uniformly selecting a set of variables $\bI$ and then by applying Lemma~\ref{lem:marginal-to-codes} to the set of variables $\bI$ to generate a clause $(\bI, \bx)$. By Lemma \ref{lem:marginal-to-codes}, if the construction uses a code $\code$ with parameters $[k, t, r]$, and provided that $|\mathcal{A}| \le 2^{t} - 1$, then the distribution $\cQ$ is $(r-1)$-wise uniform and every assignment from $\cA$ is satisfied by the support of the distribution. 

To derive the achievable bounds on $k, t, r$\redit{,} we refer to the result of Theorem \ref{thm:gv-lc}. We first obtain the distance of the code. Theorem~\ref{thm:gv-lc} specifies that $\delta(\code) = \delta_0$. Employing Definition \ref{def:lc-dis} yields:
\begin{align*}
r &= \Delta(\code) \\
&= k\cdot \delta(C) \tag{Definition \ref{def:lc-dis}}\\
&= k\cdot\delta_0. \tag{Theorem \ref{thm:gv-lc}}
\end{align*}
As for the dimension of the code, we derive a bound on it from the bound on $R(\code)$. Theorem \ref{thm:gv-lc} asserts $R(\code) \ge 1 - H(\delta_0)$. By Definition~\ref{def:lc-rate}, we get that
\begin{align*}
t &= k\cdot R(\code) \tag{Definition \ref{def:lc-rate}}\\
&\ge k(1 - H(\delta_0)). \tag{Theorem \ref{thm:gv-lc}}
\end{align*}
Combining both bound together, we get that $\code$ can be a $[k, {k(1 - H(\delta_0))}, \delta_0 k]$ code. This code can cover at most $2^{k(1 - H(\delta_0))}$ solutions. \redit{Thus,} the theorem follows.
\end{proof}

Remark that for a fixed $\delta_0$ the number of solutions can be $2^{k(1 - H(\delta_0))} = 2^{\Theta(k)}$

\section{Universal statistical lower bound for \boldmath$(r-1)$-wise uniform distribution - Proof of Theorem~\ref{thm:intro_main} and~\ref{thm:hardness-main}}\label{sec:lower-bound}
In this section, we show how
to generate hard instances via access to an $(r-1)$-wise uniform distribution $\cQ$ over clauses of size $k$.
Our main result is the following theorem.
Specifically, we will show that if we randomly \enquote{rotate} the distribution $\cQ$ to obtain a new distribution $\cQ'$ and repeatedly sample from $\cQ'$, then the resulting instance will be 
hard. 
Our main result, Theorem~\ref{thm:hardness} below, bounds the statistical dimension of the testing problem of distinguishing $\cQ'$ from the uniform distribution $\cU$.
This further implies statistical query lower bounds for the problem (See Section~\ref{sec:prelim}).

Choose a \emph{rotation} vector $\tau$
uniformly at random from $\{-1, 1\}^{n}$
and
let $\cQ_{\tau} := \cQ \xor \tau$ denote
the distribution obtained by applying the XOR with $\tau$ on the clause obtained from $\cQ$ (see Section~\ref{sec:prelim} for the definition). 
Formally,
$\cQ_{\tau}$ is the distribution of the clause
$(\bI, \bx \xor \tau[\bI])$ where $(\bI, \bx) \sim \cQ$.
We generate the $k$-SAT instance $\FF$ by sampling $m$ clauses i.i.d from
$\cQ_{\tau}$.
A formal pseudo code is shown in Algorithm~\ref{alg:generate}.
Our approach is a generalization of
that of \cite{DBLP:conf/stoc/FeldmanPV15} and \cite{DBLP:conf/stoc/KothariMOW17}
who obtain instances by
fixing a distribution $Q$ over $\{-1, 1\}^{k}$ and
sampling clauses of the form
$(\bI, \bx \xor \tau[\bI])$ where
$\bI$ is a uniformly random subset of size $k$ from $[n]$
and $\bx$ is sampled from $Q$. 
Our model captures this special case by considering
a product distribution for $\cQ$; i.e., assuming that $\bI$ and $\bx$ are sampled independently.
Our approach is more general\redit{,} however as it allows
the sampled value of $\bx$ depend on the sampled value of $\bI$, which is crucial for generating
instances with multiple solutions\redit{,} as mentioned in the introduction.
\begin{algorithm}[ht]
  \caption{Generating an instance with $m$ clauses}
  \label{alg:generate}
  \KwIn{Distribution $\cQ$, number of clauses $m$, number of variables $n$, number of variables in each clause $k$}
  \KwOut{A SAT instance with $m$ clauses}
  
  \SetKwFunction{FMain}{GenerateSATInstance}
  \SetKwProg{Fn}{Function}{:}{}
  \Fn{\FMain{$\cQ$, $m$, $n$, $k$}}{
    Let $\tau$ be a uniformly random vector in the space $\{-1, 1\}^n$\;
    
    Initialize an empty SAT instance $\FF$\;
    
    \For{$1$ \textbf{to} $m$}{
      Sample a clause $(I, x)$ from $\cQ$\;
      
      \For{$i = 1$ \textbf{to} $k$}{
        $x[i] \gets x[i] \oplus \tau[I[i]]$\;
      }
      Add clause $C$ to $\FF$\;
    }
    \Return{$\FF$}\;
  }
\end{algorithm}

We next study the hardness of distinguishing the planted instance described
above from an instance drawn uniformly at random
by studying the problem in the statistical query model.
Specifically, for any algorithm that
has indirect access to the instance via a statistical query oracle for some distribution $\cP$,
we provide a lower bound on the number of queries it needs to
determine whether $\cP = \cU$ or $\cP \in \cQ_{\tau}$.
Formally,
let $\cD_{\cQ} := \cbr{\cQ_{\tau}: \tau \in \{-1, 1\}^n}$ denote
the set of all distributions $\cQ$.
Our main result is the following Theorem,
which can be seen as a generalization of Theorem 3.5 in \cite{DBLP:conf/stoc/FeldmanPV15}
who only consider the case where $\cQ$ is a product distribution.

\begin{theorem}\label{thm:hardness}
  For every $(r-1)$-wise uniform distribution $\cQ$ over $X_k$,
  there exists a constant $\cc > 0$, depending on $k$, such that, for any $\qry \ge 1$,
  \begin{align}
    \SDN\left(\B(\cDq, \cU), \frac{\cc(\log \qry)^{\rr/2}}{\nn^{\rr/2}}\right)\ge \qry 
    .
    \label{eq:sdn_bound}
  \end{align}
\end{theorem}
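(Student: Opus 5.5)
We follow the strategy of Theorem~3.5 in \cite{DBLP:conf/stoc/FeldmanPV15}, and the goal is to show that their argument never uses independence between $\bI$ and $\bx$. We only use that, for each fixed $I$, the conditional law $Q_I$ has vanishing Fourier coefficients $\hat Q_I(S)$ for $0<|S|<r$. Concretely, let $\cD_D$ be all of $\cD_{\cQ}$, indexed by $\tau\in\{-1,1\}^n$. Fix any query $h:X_k\to\R$ with $\|h\|_{\cU}=1$ and any subset $\cD'\subseteq\cD_{\cQ}$ of density at least $1/\qry$. We will bound $\E_{\tau\sim\cD'}\abs{\E_{\cQ_\tau}[h]-\E_{\cU}[h]}$ by $\cc(\log\qry)^{r/2}/n^{r/2}$.

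\textbf{Step 1 (Fourier expansion).} Write the density of $\cQ_\tau$ relative to $\cU$ clause by clause. Since $\bI$ is uniform under $\cQ$, we have $\cQ_\tau(I,x)/\cU(I,x)=2^kQ_I(x\oplus\tau[I])=\sum_{S\subseteq[k]}2^k\hat Q_I(S)\chi_S(x)\chi_S(\tau[I])$. By $(r-1)$-wise uniformity the $S=\emptyset$ term gives $\E_{\cU}[h]$ and all terms with $0<|S|<r$ vanish. Hence
\[
\E_{\cQ_\tau}[h]-\E_{\cU}[h]=\sum_{|S|\ge r}\E_{(I,x)\sim\cU}\bigl[h(I,x)\,2^k\hat Q_I(S)\chi_S(x)\bigr]\chi_S(\tau[I])
=\sum_{\ell=r}^{k}\sum_{|S|=\ell} p_{S}(\tau).
\]
Each $p_S(\tau)=\sum_{T\subseteq[n],|T|=\ell}c_{S,T}\,\chi_T(\tau)$ is a degree-$\ell$ multilinear polynomial in $\tau$. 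Its coefficients are $c_{S,T}=\E_{\cU}[h\,2^k\hat Q_I(S)\chi_S(x)\mathbf 1\{I[S]=T\}]$. The dependence on $I$ sits entirely inside the coefficient; this is the only place the generalization enters.

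\textbf{Step 2 (bounding coefficient mass).} We bound $\sum_T c_{S,T}^2$. Apply Cauchy–Schwarz over clauses, conditioning on the event $I[S]=T$, which has probability about $\binom{n}{\ell}^{-1}$ up to constants depending on $k$, and use $|2^k\hat Q_I(S)|\le1$. This gives $\sum_T c_{S,T}^2\le O_k(1)\binom{n}{\ell}^{-1}\|h\|_{\cU}^2=O_k(n^{-\ell})$, the same estimate as in \cite{DBLP:conf/stoc/FeldmanPV15}. Uniformity of $\bI$ is essential here.

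\textbf{Step 3 (hypercontractivity on a small set).} For a degree-$\ell$ polynomial $p$ with $\|p\|_2^2\le O_k(n^{-\ell})$ and any set of $\tau$'s of density at least $1/\qry$, Bonami hypercontractivity gives the concentration bound. Combined with the standard tail-integration argument, as in Lemma~3.8 of \cite{DBLP:conf/stoc/FeldmanPV15}, it yields $\E_{\tau\sim\cD'}|p(\tau)|\le O_k\bigl((\log\qry)^{\ell/2}\bigr)\|p\|_2\le O_k((\log\qry)^{\ell/2}n^{-\ell/2})$. We sum over the $O_k(1)$ sets $S$ and degrees $\ell\ge r$. The $\ell=r$ term dominates when $\log\qry\le n$; when $\log \qry> n$ the claim is trivial because the discrimination is at most $O(1)$. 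This gives the bound $\cc(\log\qry)^{r/2}/n^{r/2}$. The Step~3 bound holds for every $h$ and every such $\cD'$, so by the definition of $\SDN$ it is at least $\qry$.

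The main obstacle is Step 2. One must verify that letting $Q_I$ vary with $I$ does not break the $\ell_2$ bound on the degree-$\ell$ part. In particular, summing over $S$ with a fixed image $T$ must not create cancellation-free overcounting beyond a $k$-dependent constant. We handle this by bounding each $S$ separately and absorbing the $2^k$ choices into $\cc$.
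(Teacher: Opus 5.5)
Your proposal is correct and follows essentially the same route as the paper. Both arguments Fourier-expand $Q_I(x\oplus\tau[I])$ for each fixed $I$, use uniformity of $\bI$ to identify the $S=\emptyset$ term with $\E_{\cU}[h]$ and $(r-1)$-wise uniformity to kill $0<|S|<r$, treat each remaining $M_S(\tau)$ as a degree-$|S|$ polynomial in $\tau$ whose coefficient norm is bounded by Cauchy--Schwarz over clauses with $I[S]=T$, and finish with the polynomial concentration lemma of \cite{DBLP:conf/stoc/FeldmanPV15} on a set of density $1/q$ plus the triangle inequality over $S$. Your version is slightly tidier: it uses $|2^k\hat{Q}_I(S)|\le 1$ rather than $\hat{Q}_I(S)^2\le 1$, groups by unordered $T$, and explicitly handles the $\ell>r$ terms and the regime $\log q>n$; these are refinements, not a different argument.
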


Combined with Theorem~\ref{thm:existence_cQ}, we obtain the following result.

\begin{theorem}\label{thm:main}
  Assume that $n, k, r, L$ satisfy
  $r\le k/2$ and $L \le 2^{k(1-H(\frac{r}{k}))}$, where $H$ is the binary entropy function.
  Let $\cA$ be an arbitrary set of assignments of size at most $L$.
  There exists a distribution $\cQ$ over $X_k$ such that
  \begin{enumerate}
      \item the testing problem
      $\cB(\cDq, \cU)$ where $\cDq = \cbr{\cQ \xor \tau: \tau \in \range^n}$ satisfies the bound in Equation~\eqref{eq:sdn_bound}.
      \item Any clause $C \sim \cQ$ satisfies
      all of the solutions in $\cA$ with probability $1$. 
  \end{enumerate}
\end{theorem}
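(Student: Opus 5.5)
The plan is to combine Theorem~\ref{thm:existence_cQ} (equivalently Lemma~\ref{lem:marginal-to-codes}) with Theorem~\ref{thm:hardness}. The first supplies the distribution $\cQ$; the second supplies the statistical-dimension bound for that same $\cQ$.

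\textbf{Step 1: choose the code.} Set $\delta_0 := r/k$. The hypothesis $r \le k/2$ gives $\delta_0 \le 1/2$, so Theorem~\ref{thm:gv-lc} yields a $[k,t,r']$ binary linear code with $r' = \delta_0 k = r$ and $t \ge k(1-H(r/k))$.

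\textbf{Step 2: check the size constraint.} We need $L \le 2^t - 1$, whereas the hypothesis only gives $L \le 2^{k(1-H(r/k))}$. I expect this off-by-one to be harmless, and there are two ways to handle it.
\begin{itemize}
\item Read the hypothesis as the strict bound $L < 2^t$, as in Theorem~\ref{thm:intro_main}.
\item Observe that Gilbert--Varshamov rates are typically not attained with equality, so $t$ can be taken large enough that $2^t - 1 \ge L$.
\end{itemize}
If neither applies, I would still need one of the following: a strict inequality somewhere, or a slightly stronger form of Lemma~\ref{lem:marginal-to-codes} covering exactly $2^t$ assignments.

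\textbf{Step 3: build $\cQ$.} Sample $\bI$ uniformly from $\cI_k$. Given $\bI = I$, sample $\bx \sim \cQ_I$, the distribution from Lemma~\ref{lem:marginal-to-codes} built from this code and the set $\cA$. That lemma gives two properties for every $I$ and every $x$ in the support of $\cQ_I$:
\begin{itemize}
\item \textbf{Satisfiability.} The clause $(I,x)$ satisfies every $\sigma\in\cA$. This is item 2 of the theorem.
\item \textbf{Uniformity.} Every marginal of $\cQ_I$ on $r-1$ coordinates is uniform. Marginals on fewer coordinates are then uniform as well, since they are marginals of uniform distributions.
\end{itemize}
Because $\bI$ is uniform and each $Q_I$ is $(r-1)$-wise uniform, $\cQ$ is $(r-1)$-wise uniform over clauses in the sense of Definition~\ref{def:d-wise_independent}.

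\textbf{Step 4: the SDN bound.} Apply Theorem~\ref{thm:hardness} to this $\cQ$. It gives Equation~\eqref{eq:sdn_bound} for $\cB(\cDq,\cU)$ with $\cDq = \{\cQ\xor\tau\}$, which is item 1.

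The only remaining subtlety is notation. The theorem is phrased in $\range^n$ while the construction works over $\{0,1\}^n$. These are identified via $1\leftrightarrow 0$, $-1\leftrightarrow 1$, so XOR becomes coordinatewise product and nothing in the argument changes.

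The main obstacle is therefore not conceptual. It is reconciling the integrality and off-by-one in the code parameters ($r = \delta_0 k$ must be an integer, and $L$ versus $2^t - 1$) with what the Gilbert--Varshamov theorem guarantees.
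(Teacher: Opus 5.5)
Your plan is the paper's argument. The paper obtains Theorem~\ref{thm:main} in one line by feeding the code-based $(r-1)$-wise uniform distribution of Theorem~\ref{thm:existence_cQ} (with $\delta_0=r/k$) into Theorem~\ref{thm:hardness}. It silently passes over the $L\le 2^{k(1-H(r/k))}$ versus $|\cA|\le 2^t-1$ mismatch that you correctly flag.

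Your remark that Gilbert--Varshamov rates are ``typically not attained with equality'' does not settle the mismatch; integrality does. For $0<r<k/2$, the number $kH(r/k)=\log_2\frac{k^k}{r^r(k-r)^{k-r}}$ is never an integer. Comparing $p$-adic valuations for odd primes $p$ would force $k$, $r$ and $k-r$ to share the same odd part, which happens only when $r=k/2$. Hence the integer $t\ge k(1-H(r/k))$ satisfies $2^t>2^{k(1-H(r/k))}\ge L$, that is, $L\le 2^t-1$. For $r=k/2$ the hypothesis forces $L\le 1$, and the $[k,1,k]$ repetition code suffices, since it yields $(k-1)$-wise and hence $(r-1)$-wise uniformity.
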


Combining the above theorems with the reduction results obtained in~\cite{DBLP:conf/stoc/FeldmanPV15} 
(See Section~\ref{sec:prelim}) we get
\begin{theorem}\label{thm:queries_lower_bound}
Any randomized statistical algorithm that correctly decides the testing problem $\cB(\cDq, \cU)$ with a probability of at least $2/3$ (considering the random choice of $\cDq$ and the randomness of the algorithm) requires either:
\begin{itemize}
\item $m$ calls to the $\textit{1-MSTAT}(L)$ oracle, where $m \cdot L \ge c_1  \left(\frac{m}{\log n}\right)^r$ for a constant $c_1 = \Omega_k(1)$, or

\item $q$ queries to $\textit{MVSTAT}(L, \frac{c_2}{L},\frac{n^r}{(\log q) ^ r})$ for a constant $c_2 = \Omega_k(1)$ and any $q \ge L$.
\end{itemize}
\end{theorem}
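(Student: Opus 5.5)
This statement should follow directly from the two SDN-to-query reductions of \cite[Theorem~3.4]{DBLP:conf/stoc/FeldmanPV15} recalled in Section~\ref{sec:prelim}, applied to the bound of Theorem~\ref{thm:hardness}. Theorem~\ref{thm:hardness} applies because the distribution $\cQ$ of Theorem~\ref{thm:main} is $(r-1)$-wise uniform. Nothing about the dependence between $\bI$ and $\bx$ is needed at this stage: the reductions only use the statistical dimension.

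\textbf{Step 1 (choice of parameters).} Fix $q\ge 1$ and set $\kappa = c(\log q)^{r/2}/n^{r/2}$. Theorem~\ref{thm:hardness} gives $d:=\SDN(\B(\cDq,\cU),\kappa)\ge q$. Then $1/\kappa^2 = n^r/(c^2(\log q)^r)$.

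\textbf{Step 2 (MVSTAT branch).} By \cite[Theorem~3.4]{DBLP:conf/stoc/FeldmanPV15}, any algorithm succeeding with probability $2/3$ needs $\Omega(d/L)$ queries to MVSTAT$(L,1/(12\kappa^2 L))$. Substituting $\kappa$, the tolerance parameter becomes $\frac{c_2}{L}\cdot\frac{n^r}{(\log q)^r}$ with $c_2 = 1/(12c^2)$. Since $d\ge q$, the number of queries is $\Omega(q/L)$. The hypothesis $q\ge L$ is what keeps this count nontrivial. This is the second bullet, with the statement's $q$ read up to the factor $L$ and constants.

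\textbf{Step 3 (1-MSTAT branch).} The reduction gives $\Omega(\min\{d,1/\kappa^2\}/L)$ calls. Suppose an algorithm uses $m$ calls. Then for every $q$ we have $mL \ge c'\min\{q,\ n^r/(c^2(\log q)^r)\}$. I will choose $q$ to balance the two terms, roughly $q \approx (n/\log n)^r$. Then $\log q = \Theta_k(\log n)$, and both terms are of order $(n/\log n)^r$, which gives $mL \ge c_1 (n/\log n)^r$. Presumably the ``$m/\log n$'' in the statement means $n/\log n$. Since $m\le n^{O(1)}$ can be assumed in the interesting regime, the constant $c_1$ absorbs the dependence on $k$ (through $c$) and on $r\le k$.

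\textbf{Main obstacle.} The main care is in Step 3. The optimization over $q$ must be done so that $\log q$ is truly $\Theta(\log n)$ and no $r$-dependent exponent hides a loss. I will also check that the success probability is averaged over the random choice of $\cQ_\tau\in\cDq$ in the same sense as in \cite{DBLP:conf/stoc/FeldmanPV15}, which holds since their Theorem~3.4 is stated for $\tau$ chosen uniformly.
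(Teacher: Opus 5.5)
Your approach is the paper's: it proves this theorem only by plugging the SDN bound of Theorem~\ref{thm:hardness} into the two reductions of \cite[Theorem~3.4]{DBLP:conf/stoc/FeldmanPV15}. Step~3 is correct. For $q=(n/\log n)^r$ you only need $\log q\le r\log n$, and the assumption $m\le n^{O(1)}$ plays no role. You are also right that $(m/\log n)^r$ must be read as $(n/\log n)^r$.

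The one real shortfall is Step~2. You get $\Omega(q/L)$ queries and then accept the bullet ``up to the factor $L$''. Since $L$ is not a constant, this is strictly weaker than what is stated. The hypothesis $q\ge L$ is also not there to keep $q/L$ nontrivial; it is there for the reparametrization that closes the gap:
\begin{itemize}
\item Apply Theorem~\ref{thm:hardness} with $qL$ in place of $q$. Then $d\ge qL$, and the reduction yields $\Omega(d/L)=\Omega(q)$ queries to \textit{MVSTAT} with parameter $\frac{n^r}{12c^2L(\log(qL))^r}$.
\item Because $q\ge L$ gives $\log(qL)\le 2\log q$, this parameter is at least $\frac{c_2}{L}\cdot\frac{n^r}{(\log q)^r}$ with $c_2=1/(12\cdot 2^r c^2)$.
\item A lower bound against the more accurate oracle transfers to the less accurate one, since every valid answer of the former is a valid answer of the latter.
\end{itemize}

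A smaller point: Theorem~3.4 of \cite{DBLP:conf/stoc/FeldmanPV15} concerns a general problem $\B(\cD,D)$ and says nothing about $\tau$. The average-case reading over a random member of $\cDq$ comes from the SDN definition instead. That definition bounds $\kappa_2$ on every subset of size at least $|\cD_D|/d$, so few queries can only rule out a small fraction of the candidate distributions.
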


\subsection{Proof of Theorem~\ref{thm:hardness}}
The proof of the theorem is obtained by simplifying and generalizing the proof of Theorem 3.5 in \cite{DBLP:conf/stoc/FeldmanPV15}. Their proof effectively decomposes the problem into special cases of $\ell$-XOR-SAT for $\ell \le k$ by considering
the contribution of each subset $S \subseteq [k]$ in the Fourier expansion of the above difference.
Specifically, they show that
$$
    \Exu{\cQ_{\tau}}{h} - \Exu{\cU{h}} =
    -2^{k}\sum_{S \subseteq [k]} \hat{Q}(S)
    \rbr{ \Exu{Z_{\ell, \tau}}{h_S} - 
    \Exu{\cU_\ell}{h_S}}
$$
where $Z_{\ell, \tau}$ is the $\ell$-XOR-SAT distribution for $\tau$ and $h_S$ is defined as a projection of $h$ into $X_\ell$. 
Such a decomposition is not possible in our case\redit{,} however\redit{,} because
the function $Q$ is not fixed and depends on $I$. As such, we cannot even define $\hat{Q}(S)$. 
To fix this issue, we opt for a more straightforward approach that directly handles the terms for each $S \subseteq [k]$. 

\begin{proof}[Proof of Theorem~\ref{thm:hardness}]
  \newcommand{\sumI}{\sum_{I \in \cI}}
  \newcommand{\sumx}{\sum_{x \in \range^{k}}}
  By definition of statistical dimension, it suffices to show that for any subset
  $\cD' \subseteq \cD_{Q}$ with size
  $|\cD'| \ge |\cD_{Q}| / q$ we have
  \begin{align*}
      \max_{h: \norm{h}_{\cU}=1}
      \Exu{\cQ' \sim \cD'}{
        \abs{\Exu{\cQ'}{h} - \Exu{\cU}{h}}
      }
      \le c \frac{(\log q)^{r/2}}{n^{r/2}}
      .
  \end{align*}

  We first give a sketch of the proof.
  Set $\tau$ to satisfy $\cQ' = \cQ_{\tau}$.
  The expectation
  $\Exu{\cQ_{\tau}}{h}$ can be written
  as $\sum_{I, x} \cQ_{\tau}(I, x) h(I, x)$.
  Let $Q_I$ denote the distribution, over sign patterns, of $\cQ$ after conditioning on the variable sequence.
  Formally,
  $Q_I(x) = \Pru{(\bI, \bx) \sim \cQ}{\bx = x \mid \bI=I}$.
  Since $\cQ$ was assumed to be $(r-1)$-wise uniform, for any fixed $I$ we know that
  $\cQ_{\tau}(I, x)$ is proportional to $Q_I(x \xor \tau[I])$;
  specifically, 
  $\cQ_{\tau}(I, x) = \frac{1}{\abs{\cI}} Q_I(x \xor \tau[I])$.
  We apply the Fourier
  transform
  on $Q_I$ to rewrite
  it as a (weighted) sum of 
  $\chi_{S}(x \xor \tau[I])$ terms for $S \subseteq [k]$.
  The term corresponding to $S \ne \emptyset$ cancels
  out with $\Exu{\cU}{h}$  because
  the Fourier coefficient for $\emptyset$ is $1/2^k$ for both $\cQ$ and $\cU$
  since they are both distributions.
  We can therefore
  rewrite the difference $\abs{\Exu{\cQ'}{h} - \Exu{\cU}{h}}$ as 
  $\abs{\sum_{S \subseteq [k], S \ne \emptyset} M_S(\tau)}$,
  where $M_S(\tau)$ denotes the sum of the terms corresponding to $\chi_S(x \xor \tau[I])$ in the expansion of $Q_I$.
  Crucially, we only need to consider $S$ of size at most $r$
  since the smaller $S$ do not appear because of the $(r-1)$-wise independence condition.

  Using Jensen's inequality, we show that it suffices
  to separately bound $\Exu{\cQ_{\tau} \sim \cD'}{\abs{M_S(\tau)}}$
  for each $S$ and then combine these bounds for a valid upper bound on
  $\Exu{\cQ_{\tau} \sim \cD'}{\sum_{S} \abs{M_S(\tau)}}$.
  This is formally shown in Equation~\eqref{eq:sdn_bound_decompose}.
  We observe that each $M_S(\tau)$ is a polynomial in $\tau$ because it is a sum of $\chi_S(x \xor \tau[I])$
  terms for different $i$, 
  and each such term can be written as
  $\chi_S(x) \prod_{i \in I[S]} \tau[i]$. 
  Using the fact that $\tau$ is chosen from a large
  set, specifically 
  $\cQ_\tau \sim \cD'$, we relate the expectation $\Exu{\cQ_\tau \sim \cD'}{\abs{M_S(\tau)}}$
  to the norm of the coefficients of $M_S$. 
  This norm can be bounded fairly easily\redit{,} however using the assumption 
  $\norm{h}_{\cU} = 1$.

  We proceed with a formal proof.
  We drop the subscript $k$ when it is clear from the context.
  Expanding the definition of $\cQ_{\tau}$ and applying
  the Fourier transform, we obtain
  \begin{align*}
    \Exu{\cQ_{\tau}}{h} 
    &=
    \sumI \sumx
    h(I, x) \cQ_\tau(I, x)
    \\&=
    \sumI \sumx
    h(I, x) \cQ(I, x \xor \tau[I])
    &\EqComment{Definition of $\cQ_{\tau}$}
    \\&=
    \frac{1}{\abs{\cI}}
    \sumI \sumx
h(I, x) Q_I(x \xor \tau[I])
    &\EqComment{Definition of $Q_I$}
    \\&=
    \frac{1}{\abs{\cI}}
    \sum_{I} \sum_{x}
    \sum_{S \subseteq [k]}
    h(I, x) \hat{Q}_I(S)  \chi_S(x \xor \tau[I])
    &\EqComment{Fourier expansion}
    .
  \end{align*}
  Let $\chi(x) = \xor_{i=1}^{|x|}x[i]$ for an arbitrary vector $x$ denote the XOR of $x$.
  Note that
  $\chi$ is multiplicative, i.e.,
  $\chi(x \xor y) = \chi(x)\chi(y)$ for any two vectors $x, y$ of same length.
  Grouping the above sum in terms of the variable $S$, we can rewrite it as 
  \begin{math}
    \sum_{S} M_S(\tau)
  \end{math}
  where
  \begin{align}
    M_S(\tau) &:= 
    \frac{1}{\abs{\cI}}
    \sum_{I, x}
    h(I, x)
    \hat{Q}_I(S) 
    \chi_{S}(x \xor \tau[I])
    \notag
    \\&=
    \frac{1}{\abs{\cI}}
    \sum_{I, x}
    h(I, x)
    \hat{Q}_I(S) 
    \chi((x \xor \tau[I])[S])
    &\EqComment{Definition of $\chi_S$}
    \notag
    \\&=
    \frac{1}{\abs{\cI}}
    \sum_{I, x}
    h(I, x)
    \hat{Q}_I(S) 
    \chi(x[S]) \chi(\tau[I[S]])
    &\EqComment{Multiplicativity of $\chi$}
    .
    \label{eq:M_S_sum}
  \end{align}
  Similarly, for $\cU$,
  \begin{align*}
    \Exu{\cU}{h} 
    &= 
    \sumI \sumx
    h(I, x) \cU(I, x)
    \\&=
    \frac{1}{2^k \abs{\cI_k}}
    \sumI \sumx
    h(I, x)
    &\EqComment{Definition of $\cU$}
    ,
  \end{align*}
  which is the same $M_{\emptyset}(\tau)$ because $\hat{Q}_I(\emptyset) = \frac{1}{2^k} \sum_{x}Q_I(x) = \frac{1}{2^k}$
  and $\chi(x[\emptyset]) = \chi(\tau[I[\emptyset]]) = \chi(\emptyset) = 1$. 
  It follows that
  \begin{align}
    \Exu{\cQ_{\tau} \sim \cD_{Q}}{
      \abs{\Exu{\cQ_{\tau}}{h} - \Exu{\cU}{h}}
    }
    \notag
    &=
    \Exu{\cQ_{\tau} \sim \cD_{Q}}{
      \abs{\sum_{S \ne \emptyset} M_{S}(\tau)}
    }
    \notag
    \\&\le
    \Exu{\cQ_{\tau} \sim \cD_{Q}}{
      \sum_{S \ne \emptyset} \abs{M_{S}(\tau)}
    }
    &\EqComment{Jensen's inequality}
    \notag
    \\&=
    \sum_{S \ne \emptyset} 
    \Exu{\cQ_{\tau} \sim \cD_{Q}}{
      \abs{M_{S}(\tau)}
    }
    &\EqComment{Linearity of expectation}
    \label{eq:sdn_bound_decompose}
    .
  \end{align}
  It\redit{,} therefore suffices to bound
  \begin{math}
    \Exu{\cQ_{\tau} \sim \cD_{Q}}{
      \abs{M_{S}(\tau)}
    }
  \end{math}
  for $S \ne \emptyset$. 
  To do this, we view $M_S(\tau)$ as a polynomial of degree
  $\ell := \abs{S}$ in $\tau$
  and bound the norm of its coefficients.
  We then use the following lemma 
  which turns any bound on the coefficients of a polynomial in $\tau$
  into a bound on the expectation
  of the absolute value of this polynomial, where the
  expectation is over $\tau$ sampled from a large set (in this case $\cDq$).
  \begin{lemma}[Lemma 5.5 of \cite{DBLP:conf/stoc/FeldmanPV15}]
    \label{lm:santosh}
    Let $p(x)$ be a polynomial of degree $\ell$ in $\{-1, 1\}^n$,
    and let $\cS \subseteq \{-1, 1\}^n$ be a set of assignments.
    The following holds for $d = 2^n / \abs{\cS}$
    \begin{align*}
      \Exu{\tau \sim \cS}{\abs{p(\tau)}} \le O_\ell\rbr{\rbr{\ln d}^{\ell / 2} \norm{p}_2}.
    \end{align*}
  \end{lemma}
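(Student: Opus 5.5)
The plan is to compare the expectation over $\cS$ with the expectation under the uniform distribution $\cU_n$ on $\{-1,1\}^n$, and to control the latter with hypercontractivity. Here $\norm{p}_2 = \rbr{\Exu{\tau\sim\cU_n}{p(\tau)^2}}^{1/2}$, which by Parseval equals the $\ell_2$ norm of the Fourier coefficients of $p$. The key observation is that for any event $E\subseteq\{-1,1\}^n$,
\[
\Pru{\tau\sim\cS}{\tau\in E}\le \min\cbr{1,\ \frac{|E|}{|\cS|}}=\min\cbr{1,\ d\cdot\Pru{\tau\sim\cU_n}{\tau\in E}}.
\]
So a uniform tail bound for $|p|$ that decays faster than $1/d$ beyond a threshold $s_0$ gives $\Exu{\cS}{|p|}\lesssim s_0$. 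We also read the claimed bound with $\ln d$ replaced by $\max\{\ln d,1\}$. When $d$ is close to $1$, Cauchy--Schwarz already gives $\Exu{\cS}{|p|}\le\sqrt{d}\,\norm{p}_2$, so the small-$d$ regime is trivial.

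\textbf{Step 1 (uniform tail bound).} By the Bonami--Beckner hypercontractive inequality, a degree-$\ell$ polynomial satisfies $\norm{p}_q\le (q-1)^{\ell/2}\norm{p}_2$ for all $q\ge 2$. Markov's inequality applied to $|p|^q$ then gives
\[
\Pru{\cU_n}{|p|\ge u\norm{p}_2}\le \rbr{\frac{(q-1)^{\ell/2}}{u}}^q .
\]
Choosing $q\approx u^{2/\ell}/e$ (valid once $u$ exceeds a constant depending on $\ell$) yields
\[
\Pru{\cU_n}{|p|\ge u\norm{p}_2}\le \exp\rbr{-c_\ell\, u^{2/\ell}}
\]
for some $c_\ell>0$.

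\textbf{Step 2 (layer-cake over $\cS$).} Normalize $\norm{p}_2=1$. Write $\Exu{\cS}{|p|}=\int_0^\infty \Pru{\cS}{|p|>u}\,du$ and split the integral at
\[
u_0=\rbr{\frac{\ln d}{c_\ell}}^{\ell/2}+C_\ell .
\]
On $[0,u_0]$, bound the probability by $1$, which contributes $u_0=O_\ell((\ln d)^{\ell/2})$. On $[u_0,\infty)$, use the observation together with Step 1 to bound the integrand by $d\exp(-c_\ell u^{2/\ell})$. The substitution $w=c_\ell u^{2/\ell}-\ln d$ turns this tail integral into
\[
O_\ell\rbr{\int_0^\infty e^{-w}(w+\ln d)^{\ell/2-1}\,dw}=O_\ell\rbr{(1+\ln d)^{\ell/2}} .
\]
Adding the two pieces and undoing the normalization gives the lemma.

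\textbf{Main obstacle.} The only delicate point is Step 1: obtaining the tail exponent $u^{2/\ell}$ with constants depending only on $\ell$, uniformly in $n$. This comes from hypercontractivity. To keep the argument self-contained I would derive the moment bound $\norm{p}_q\le(q-1)^{\ell/2}\norm{p}_2$ via the standard $(2,q)$-hypercontractivity of the noise operator $T_\rho$ with $\rho=1/\sqrt{q-1}$: since $p$ has degree at most $\ell$, one has $\norm{p}_q=\norm{T_\rho T_{1/\rho}p}_q\le\norm{T_{1/\rho}p}_2\le\rho^{-\ell}\norm{p}_2$. The rest is bookkeeping of the integral in Step 2.
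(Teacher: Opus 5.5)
Your proof is correct and follows the standard argument for this lemma, which the paper only cites without proving. As I recall the proof in \cite{DBLP:conf/stoc/FeldmanPV15}, it uses the same two ingredients: the hypercontractive tail bound $\Pr_{\cU}[|p|\ge t\norm{p}_2]\le \exp(-c_\ell t^{2/\ell})$ for degree-$\ell$ polynomials, and the fact that restricting to a set of density $1/d$ inflates probabilities by at most a factor $d$, with the resulting tail then integrated. Your reading of $\ln d$ as $\max\{\ln d,1\}$ is a necessary correction, not just a convenience: for $\cS=\{-1,1\}^n$ (so $d=1$) the bound as literally stated would force $\Exu{\tau\sim\cS}{\abs{p(\tau)}}=0$ for every $p$.
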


  We now proceed with a proof.
  For any $A \subseteq[n]$ of size $\ell$ we define
  \begin{align*}
      u_{S, A} = 
      \frac{1}{\abs{\cI}}
      \sum_{I: I[S] = A} \sum_{x\in \{-1, 1\}^k}
      \hat{Q}_I(S) \chi(x[S]) h(I, x),
  \end{align*}
  Grouping Equation~\eqref{eq:M_S_sum} based on $I[S]$ we obtain
  $M_S(\tau) = \sum_{A} u_{S, A} \chi(\tau[A]) = \sum_{A} u_{S, A} \prod_{i \in A} \tau[i]$,
  which verifies that $M_S(\tau)$ is a polynomial in $\tau$.

  We proceed to bound $\norm{M_S}_2 = \sqrt{\sum_{A} u_{S, A}^2}$.
  For any fixed $A$, using the Cauchy–Schwarz inequality\redit{,} we obtain
  \begin{align}
    \abs{\cI}^2 u_{S, A}^2
    &=
    \rbr{
      \sum_{I: I[S] = A}
      \sum_{x}
      \hat{Q}_{I}(S) \chi(x[S])
      h(I, x)
    }^2
    \notag
    \\&\le
    \rbr{\sum_{I: I[S] = A}
    \sum_{x}
    \hat{Q}^2_{I}(S) \chi^2(x[S])}
    \rbr{
      \sum_{I: I[S]  = A}
      \sum_{x} 
      h^2(I, x)
    }
    .
    \label{eq:u_S_A}
  \end{align}

  We start by bounding the first term.
  By Parseval's identity,
  \begin{align*}
    \hat{Q}_I^2(S)
    &\le 
    \sum_{S'}
    \hat{Q}_I^2(S')
    &\EqComment{Non-negativity of $Q_{I}^2(S')$}
    \\&=
    \Exu{x \sim \{-1, 1\}^k}
    {Q_I^2(x)}
    &\EqComment{Parseval's identity}
    \\&\le 
    \Exu{x \sim \{-1, 1\}^k}{1}
    &\EqComment{Since $Q_I(x) \le 1$}
    \\&=1. 
  \end{align*}
  Since $\chi^2(x[S]) = 1$, this implies
  \begin{align*}
    \sum_{I: I[S] = A}
    \sum_{x}
    \hat{Q}^2_{I}(S) \chi^2(x[S])
    &\le 
    \sum_{I: I[S] = A}
    \sum_{x}
    1
    =
    \abs{\cbr{I: I[S] = A}} 2^k
    ,
  \end{align*}
  We note however that since the constraint $\cbr{I[S] = A}$
  groups all values of $I$ in terms of $I[S]$ and
  the number of possible values for $I$ and
  $I[S]$ is $\abs{\cI_k}$ and $\abs{\cI_\ell}$ respectively,
  \begin{math}
    \abs{\cbr{I: I[S] = A}} = \frac{\abs{\cI_k}}{\abs{\cI_\ell}}.
  \end{math}
  Plugging this back in
  \eqref{eq:u_S_A}
  we obtain
  \begin{align*}
    \abs{\cI_k}^2
    \norm{M_S}_2^2
    =
    \abs{\cI_k}^2
    \sum_{A} u_{S, A}^2
    &\le
    2^k\frac{\abs{\cI_k}}{\abs{\cI_\ell}}
    \rbr{
      \sum_{A} \sum_{I: I[S] = A} \sum_{x} h^2(I, x)
    }
    \\&=
    2^k\frac{\abs{\cI_k}}{\abs{\cI_\ell}}
    \rbr{
      \sum_{I, x} \sum_{A=I[S]} h^2(I, x)
    }
    &\EqComment{Rearranging}
    \\&=
    2^k\frac{\abs{\cI_k}}{\abs{\cI_\ell}}
    \rbr{
      \sum_{I, x} h^2(I, x)
    }
    &\EqComment{Uniqueness of $I[S]$}
    \\&\le
    2^{2k}\frac{\abs{\cI_k}}{\abs{\cI_\ell}}
    \abs{\cI_k}
  \end{align*}
  where the last inequality uses the fact that 
  \begin{align*}
    \frac{\sum_{I, x} h^2(I, x)}{\abs{X_k}}
    = \Ex{h^2(I, x)} = 
    \norm{h}_2^2
    \le 1
    \implies 
    \sum_{I, x} h^2(I, x) \le \abs{X_k}
     = 2^k \abs{\cI_k} 
     .
  \end{align*}
  It follows that
  \begin{math}
    \norm{M_S}_2 \le \frac{2^k}{\sqrt{\abs{\cI_\ell}}} .
  \end{math}
  This bounds the norm of the coefficients of $M_S$
  which is a polynomial in $\tau$. 
  By assumption in the theorem, the set $\cD'$ has size $\ge \abs{\cD}/q$.
  Therefore, Lemma~\ref{lm:santosh} implies that
  \begin{align*}
    \Exu{\cQ_{\tau} \sim \cD'}{\abs{M_S(\tau)}}
    \le O_k\rbr{
      \frac{\ln(q)^{\ell/2}}{\sqrt{\abs{\cI_\ell}}}
    }
    = O_k\rbr{\frac{\ln(q)^{\ell/2}}{n^{\ell / 2}}}
  \end{align*}
  Summing over $S$
  and plugging back in Equation~\eqref{eq:sdn_bound_decompose},
  \begin{align*}
  \Exu{\cQ_\tau \sim \cD'}{
  \abs{\Exu{\cQ_{\tau}}{h} - \Exu{\cU}{h}}
  }
  \le O_k\rbr{\frac{\ln(q)^{r/2}}{n^{r / 2}}}
  ,
  \end{align*}
  which finishes the proof.
\end{proof}

\section{Worst-case analysis of SAT instances with multiple solutions - Proof of Theorem~\ref{thm:worst-case-main}}
\label{sec:worst-case}

In this section, we formalize and analyze the computational complexity of satisfiability problems (SAT) when a large number of solutions is guaranteed. Specifically, we consider SAT instances where we know in advance that the instance either has no solutions or has at least a certain number of solutions.

\begin{definition}
Let $\SAT(n, s)$ denote the class of SAT instances
with \(n\) variables, where it is known a priori that the instances have either zero solutions or at least \(s\) solutions. 
The class of all SAT instances over $n$ variables
can be denoted as \(\SAT(n, 1)\).
\end{definition}

\begin{theorem}
\label{thm:wc}
Assuming the Exponential Time Hypothesis (ETH), there is no polynomial-time algorithm for solving \(\SAT(n, 2^{n - \log(n)^{1+c}})\) for any \(c > 0\).
\end{theorem}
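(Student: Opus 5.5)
We prove Theorem~\ref{thm:wc} by a padding reduction from ordinary 3-SAT: an arbitrary instance on few variables is embedded into an instance on many more variables, so that every satisfying assignment of the original formula blows up into exponentially many satisfying assignments of the padded one. Let $\varphi$ be a 3-SAT formula on $m$ variables (by the sparsification lemma we may assume it has $O(m)$ clauses). Choose $n$ with $m = \lceil \log(n)^{1+c} \rceil$, i.e.\ $n = 2^{\Theta(m^{1/(1+c)})}$. Define $\varphi'$ on $n$ variables: the first $m$ variables carry $\varphi$ verbatim, and the remaining $n-m$ variables are fresh and appear in no clause.

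The solution count is then immediate. If $\varphi$ is unsatisfiable, so is $\varphi'$. If $\varphi$ has a satisfying assignment, then each of its satisfying assignments extends to $\varphi'$ by every one of the $2^{n-m}$ choices of the free variables. Hence $\varphi'$ has at least $2^{n-m} \ge 2^{n-\log(n)^{1+c}}$ solutions, so $\varphi' \in \SAT(n, 2^{n-\log(n)^{1+c}})$ and has the same satisfiability status as $\varphi$. If the class is required to consist of genuine $k$-CNFs in which every variable occurs, we can instead attach each dummy variable $z$ through a tautological clause such as $(z \vee \neg z \vee w)$. This adds no constraint on the dummies and keeps the count intact.

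The remaining step is the runtime comparison, which is where care is needed. The reduction runs in time $\mathrm{poly}(n)$. Suppose some algorithm decides $\SAT(n, 2^{n-\log(n)^{1+c}})$ in time $n^{a}$. Then 3-SAT on $m$ variables is solved in time $\mathrm{poly}(n) = 2^{O(\log n)} = 2^{O(m^{1/(1+c)})} = 2^{o(m)}$, contradicting ETH.

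The main obstacle is the choice of $n$ as a function of $m$. It must make $n$ superpolynomial in $m$, so that $\mathrm{poly}(n)$ translates into subexponential time in $m$, while keeping $m \le \log(n)^{1+c}$ so the solution-count promise holds. The exponent $1+c$ with $c>0$ is exactly what leaves the slack $m^{1/(1+c)} = o(m)$. We will also handle rounding, which matters only for large $m$: for each $m$ pick the least $n$ with $\log(n)^{1+c} \ge m$. The fresh variables add no clauses, so the sparsification step is needed only to make ETH apply in its clause-count form, and $n$ is still bounded by $2^{O(m^{1/(1+c)})}$.
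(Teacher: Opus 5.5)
Your proposal is correct and follows essentially the same padding argument as the paper: embed an arbitrary SAT instance on $m \approx \log(n)^{1+c}$ variables into $n$ variables by adding free dummy variables, so that a $\mathrm{poly}(n)$-time algorithm would yield a $2^{O(m^{1/(1+c)})} = 2^{o(m)}$-time algorithm, contradicting ETH. Your extra details (choosing $n$ so that $m \le \log(n)^{1+c}$, the optional sparsification remark, and the tautological-clause variant) make explicit points the paper leaves implicit, but they do not change the argument.
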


This statement underscores the profound computational challenge presented by SAT instances with exponentially many solutions, delineating a clear boundary in the solvability landscape.
It is important to note that we
say an algorithm $\mathcal{A}$ can \emph{solve} $\SAT(n, s)$ if, for any formula $f$ from this class, $\mathcal{A}$ can determine whether $f$ is satisfiable or unsatisfiable.

\begin{proof}
We construct a polynomial-time reduction from an arbitrary instance of \(\SAT(\log(n)^{1+c}, 1)\), known to be hard to solve in subexponential time under the Exponential Time Hypothesis (ETH), to \(\SAT(n, 2^{n - \log(n)^{1+c}})\). By augmenting the \(\SAT(\log(n)^{1+c}, 1)\) instance with \(n - \log(n)^{1+c}\) additional variables, we enable each solution of the original problem to pair with \(2^{n - \log(n)^{1+c}}\) assignments of the new variables, thus forming a \(\SAT(n, 2^{n - \log(n)^{1+c}})\) instance. Denote the size of the subinstance as \(m = \log(n)^{1+c}\), leading to \(n = 2^{m^{1/(1+c)}}\). The existence of a polynomial-time algorithm for \(\SAT(n, 2^{n - \log(n)^{1+c}})\), or \(O(n^k)\), implies an algorithm for the subproblem within \(O(2^{k \cdot m^{1/(1+c)}})\), which constitutes a subexponential time complexity. This contradicts the ETH by suggesting a subexponential time solution to a problem that is presumed to require exponential time, thereby affirming the infeasibility of such a polynomial-time algorithm for the given SAT formulation.
\end{proof}

\begin{claim}
The introduction of a large solution space in \(\SAT(n, 2^{n - \log(n)^{1+c}})\) does not simplify the problem, preserving its computational complexity akin to that of general SAT instances.
\end{claim}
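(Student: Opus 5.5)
The claim is informal, so the plan is to make it precise and then prove that precise version. The formal reading I would adopt is this: if $\SAT(n, 2^{n-\log(n)^{1+c}})$ were easier than general SAT, in the sense of admitting a polynomial-time (or even $2^{o(\log(n)^{1+c})}$-time) decision procedure, then general SAT on $m$ variables would be solvable in subexponential time $2^{o(m)}$, contradicting ETH. The proof is therefore a \emph{solution-preserving padding reduction}, reusing the construction from the proof of Theorem~\ref{thm:wc}. I would also show that the reduction preserves the instance structure, so that the conclusion covers the $(7/8+\epsilon)$-satisfiable promise in Theorem~\ref{thm:worst-case-main}.

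The steps, in order, are as follows.

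First, take an arbitrary $3$-SAT formula $f$ on $m$ variables. Set $n$ so that $m = \log(n)^{1+c}$, that is, $n = 2^{m^{1/(1+c)}}$, and add $n-m$ fresh variables that appear in no clause. The resulting formula $f'$ has exactly $|\SAT(f)|\cdot 2^{n-m}$ satisfying assignments. Hence $f'$ either has no solutions or has at least $2^{n-\log(n)^{1+c}}$ solutions, so $f' \in \SAT(n, 2^{n-\log(n)^{1+c}})$.

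Second, check that the clause set of $f'$ is literally the clause set of $f$. Consequently every structural property is inherited: clause width, and the fraction of clauses satisfiable by an assignment (which is unchanged when the fresh variables are extended arbitrarily). In particular, if the source instances come from a gap version of $3$-SAT that is hard under ETH (a PCP-based version with $7/8+\epsilon$ vs.\ satisfiable), the promise carries over unchanged. This shows the large solution space gives no structural handle: the hard core is exactly an arbitrary general-SAT instance.

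Third, do the time accounting. Writing the instance takes $\mathrm{poly}(n)$ time. An $O(n^k)$ algorithm for the padded class then decides $f$ in time $2^{k m^{1/(1+c)}} = 2^{o(m)}$, which contradicts ETH.

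The main obstacle is the first step: ensuring that $n$ is superpolynomial in $m$ while still being constructible. The padding must be exponential in $m^{1/(1+c)}$. Ordinary quasi-polynomial padding would not suffice, because we need $\log(n)^{1+c}$ rather than $\log n$ free degrees of freedom. I would verify that $n-m \ge 0$ and that the reduction time $2^{O(m^{1/(1+c)})}$ is itself subexponential, so the composed algorithm stays below $2^{\delta m}$ for every $\delta>0$.

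A secondary subtlety is the sparsification needed for the gap version. I would handle it by invoking a gap-ETH-style assumption, or by noting that the $(7/8+\epsilon)$ clause applies only to instances that already satisfy the promise, which padding preserves.
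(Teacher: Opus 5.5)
Your proposal is correct and follows the paper's route: the claim rests on the padding reduction in the proof of Theorem~\ref{thm:wc}. You add $n-m$ unused variables with $n=2^{m^{1/(1+c)}}$, so an $O(n^k)$ algorithm for the padded class gives a $2^{k m^{1/(1+c)}}=2^{o(m)}$ algorithm for $m$-variable SAT, contradicting ETH; your remarks on preserving the $(7/8+\epsilon)$ promise mirror the paper's subsequent theorem and are not needed for the claim itself.
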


Furthermore, we demonstrate the tightness of our bound by considering the scenario where \(c \geq 0\), specifically for \(c = 0\), which implies \(2^{n-\log(n)}\) solutions for the SAT instance. This scenario offers an intriguing contrast: a polynomial-time algorithm becomes feasible for \(\SAT(n, 2^{n-\log(n)})\).

To elucidate, consider a random assignment of variables in a \(\SAT(n, 2^{n-\log(n)})\) problem. The probability \(P\) that a random assignment satisfies the instance is given by the ratio of the number of satisfying solutions to the total number of possible assignments:

\[
P = \frac{2^{n-\log(n)}}{2^n} = \frac{1}{2^{\log(n)}} = \frac{1}{n}
\]

Thus, the probability of a random assignment not satisfying the instance is \(1 - \frac{1}{n}\). To find a satisfying assignment, we can repeatedly select random assignments. After \(k\) independent trials, the probability that all trials fail to find a satisfying assignment is 
\(
(1 - \frac{1}{n})^k
\).
To ensure a high probability of success, we aim for this probability to be less than a certain threshold, say \(\epsilon\), where \(\epsilon > 0\). To determine the required number of trials, we select \(k\) satisfying 
\(
(1 - \frac{1}{n})^k < \epsilon\redit{,}
\)
 For a sufficiently large \(n\), we can approximate \(1 - \frac{1}{n}\) as \(e^{-\frac{1}{n}}\), giving us:
\(
e^{-\frac{k}{n}} < \epsilon
\)
Solving for \(k\), we get:
\(
k > n \log(\frac{1}{\epsilon})
\)

Since \(k\) scales linearly with \(n\), a polynomial number of executions (specifically, \(O(n)\)) suffices to provide a YES/NO answer to \(\SAT(n, 2^{n-\log(n)})\) with high probability, thereby illustrating that the bound of \(2^{n-\log(n)^{1+c}}\) for \(c > 0\) is tight. Reducing \(c\) to \(0\) shifts the problem into a domain where polynomial-time solutions emerge, underscoring the critical nature of the \(c > 0\) condition in maintaining the problem's hardness under our framework.

\color{black}

Furthermore, this technique can be utilized to demonstrate the hardness associated with specific restrictions. For example, in \cite{certsol} Hsieh, Mohanty, and Xu propose the open problem of distinguishing between 3-SAT instances that are either unsatisfiable or have at least \(T\) solutions, under the stipulation that the input is guaranteed to be (7/8 + \(\epsilon\))-satisfiable (refer to Question 1.22 in \cite{certsol}).

We can directly respond to and resolve this query by extending Theorem~\ref{thm:wc}:

\begin{theorem}
Assuming the Exponential Time Hypothesis (ETH), there is no polynomial-time algorithm for solving \(\SAT(n, 2^{n - \log(n)^{1+c}})\) for any \(c > 0\), even when it is guaranteed that the input is (7/8 + \(\epsilon\))-satisfiable.
\end{theorem}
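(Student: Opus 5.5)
The plan is to reuse the padding reduction from the proof of Theorem~\ref{thm:wc} and add one gadget on top of it, so that the produced formula is guaranteed to be $(7/8+\epsilon)$-satisfiable whether the core instance is satisfiable or not. Start with an arbitrary 3-SAT instance $\phi$ on $m = \log(n)^{1+c}$ variables with $m' = \mathrm{poly}(m)$ clauses; under ETH it cannot be decided in time $2^{o(m)}$. Pad $\phi$ with $n-m$ fresh variables $y_1,\dots,y_{n-m}$, exactly as in Theorem~\ref{thm:wc}. Then every satisfying assignment of $\phi$ extends to $2^{n-m}$ satisfying assignments, and the result lies in $\SAT(n,2^{n-\log(n)^{1+c}})$. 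Satisfiability is unchanged.

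The new ingredient is a block of \emph{filler} clauses on the fresh variables that are all satisfiable, and that outnumber the clauses of $\phi$. Concretely, add $N$ 3-clauses of the form $(y_i \vee y_j \vee y_\ell)$, all positive literals, on distinct triples. Any assignment with all $y$'s true satisfies every one of them. The total clause count is $N+m'$, and at least $N + \tfrac78 m'$ clauses are satisfiable: a random assignment to the $x$-variables satisfies $7/8$ of the $\phi$-clauses in expectation, and some assignment achieves that expectation. We want this fraction to be at least $7/8+\epsilon$. Since $N/(N+m') \ge 7/8+\epsilon$ already suffices, choosing $N = C_\epsilon m'$ works, and $N$ is polylogarithmic in $n$, so the construction takes polynomial time. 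There are enough distinct triples because $n-m \gg \mathrm{polylog}(n)$. If literals must be repeated or clauses must mention each variable, a reordering handles it.

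The key check, and the main obstacle, is that the filler does not destroy the solution-count promise. Forcing $y$-variables true through the filler would shrink the number of solutions: each filler clause kills $1/8$ of the $y$-assignments it touches. So each filler clause should use variables from a small dedicated subset $Y_0$ of size $O(\mathrm{polylog}\, n)$, and the reduction should count solutions with $m$ replaced by $m+|Y_0|$. Solutions still number at least $2^{n-(m+|Y_0|)}$. Since $m+|Y_0| = O(\log(n)^{1+c})$, rescale by choosing the core size as $m = \log(n)^{1+c}/K$ for a constant $K$, so that $m+|Y_0| \le \log(n)^{1+c}$. Then ETH still forbids a $2^{O(m^{1/(1+c)}\cdot\mathrm{const})}$, i.e. subexponential-in-$m$, algorithm. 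A cleaner alternative, to try first, is to make each filler clause a tautology-free clause that is satisfied by all settings of $Y_0$ except a negligible fraction. The simplest version is to take $Y_0 = \{y_1,y_2,y_3\}$ and repeat $(y_1\vee y_2\vee y_3)$ $N$ times if repeated clauses are allowed, which costs only a factor $7/8$ in the solution count. This makes the counting loss a constant factor absorbed by the $\log(n)^{1+c}$ slack.

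Finally, the contrapositive argument is the same as in the proof of Theorem~\ref{thm:wc}. A polynomial-time algorithm for the promise class would decide $\phi$ in time $2^{O(m^{1/(1+c)})}$, which contradicts ETH.
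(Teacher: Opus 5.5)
There is a genuine gap, and it comes from how you read the promise. Your argument is sound for the reading you adopted: every input has an assignment satisfying at least a $(7/8+\epsilon)$ fraction of its clauses. But this theorem is not about that promise. Under your reading the ``even when'' clause adds nothing. Your own filler gadget shows the promise can be imposed on any formula at no cost, so you have essentially reproved Theorem~\ref{thm:wc}.

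The paper's proof, and the Hsieh--Mohanty--Xu question it is meant to answer, use $(7/8+\epsilon)$-satisfiable in H{\aa}stad's sense. The input either has at least $2^{n-\log(n)^{1+c}}$ solutions or is at most $(7/8+\epsilon)$-satisfiable, so the unsatisfiable side must be far from satisfiable. Your construction moves in exactly the wrong direction for this. Starting from an arbitrary 3-SAT instance $\phi$, there is no gap to preserve. After adding $N=C_\epsilon m'$ filler clauses, every unsatisfiable output is $(1-O(1/C_\epsilon))$-satisfiable, i.e.\ nearly satisfiable.

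The missing ingredient is a gap-producing reduction. The paper starts from the NP-hard problem of distinguishing fully satisfiable 3-SAT instances from at most $(7/8+\epsilon)$-satisfiable ones (H{\aa}stad's PCP-based hardness). It then pads with fresh variables that appear in no clause. Because the clause set is untouched, NO instances stay at most $(7/8+\epsilon)$-satisfiable, while each solution of a YES instance extends to $2^{n-N}$ solutions. No filler is added, so the solution-count obstacle you spend most of your effort on never arises.

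When you redo the argument this way, the ETH step needs one quantitative point that the paper passes over. A $\mathrm{poly}(n)$-time algorithm only solves the gap problem on $N=\log(n)^{1+c}$ variables in time $2^{O(N^{1/(1+c)})}$. So you need the gap problem itself to require time $2^{N^{1-o(1)}}$ under ETH. That holds with a quasi-linear-size gap reduction, such as Moshkovitz--Raz composed with H{\aa}stad. A reduction with polynomial blowup $N=m^{a}$ only gives the statement for $c>a-1$.
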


\begin{proof}
As noted by Hsieh et al. \cite{certsol}, distinguishing between a (7/8 + \(\epsilon\))-satisfiable 3-SAT instance and a fully satisfiable 3-SAT instance is NP-hard. Following the framework of Theorem~\ref{thm:wc}, we consider a reduction from an instance of this problem to a \(\SAT(n, 2^{n - \log(n)^{1+c}})\) instance. By adding extra variables, we expand the solution space while preserving the (7/8 + \(\epsilon\)) satisfiability condition. The addition of variables exponentially increases the number of solutions only if the original instance is already satisfiable, without altering the unsatisfiability of cases where no solutions exist. Consequently, the complexity of distinguishing between these markedly different states—completely unsatisfiable versus possessing a multitude of solutions—remains NP-hard under the ETH.
\end{proof}

It is also important to note that by maintaining the (7/8 + \(\epsilon\)) satisfiability condition during the reduction, all related justifications concerning this transformation are valid. This includes the argument around the tightness of this bound and the existence of a randomized algorithm for instances with at least $2^{n-\log(n)}$ solutions. 

\end{document}